\newtheorem{assumption}{Assumption}
\newtheorem{theorem}{\textbf{Theorem}}
\begin{document}
	
	\title{
	Learning Performance Optimization for Edge AI System with Time and Energy Constraints}
	\author{	Zhiyuan~Zhai, Wei Ni, \IEEEmembership{Fellow, IEEE},
	and Xin~Wang,~\IEEEmembership{Fellow, IEEE}
}

	\maketitle

\begin{abstract}
	Edge AI, which brings artificial intelligence to the edge of the network for real-time processing and decision-making, has emerged as a transformative technology across various applications. However, the deployment of Edge AI systems faces significant challenges due to high energy consumption and extended operation time.
	In this paper, we consider  an Edge AI system which integrates the  data acquisition,  computation and communication processes, and  focus on improving learning performance of this system.
We  model the time and energy consumption of different processes and  perform a rigorous convergence analysis to quantify the impact of key system parameters, such as the amount of collected data and the number of training rounds, on the learning performance. Based on this analysis, we formulate a system-wide optimization problem that seeks to maximize learning performance under given time and energy constraints. We explore both homogeneous and heterogeneous device scenarios, developing low-complexity algorithms based on one-dimensional search and alternating optimization to jointly optimize data collection time and training rounds. Simulation results validate the accuracy of our convergence analysis and demonstrate the effectiveness of the proposed algorithms, providing valuable insights into designing energy-efficient Edge AI systems under real-world conditions.

\end{abstract}

\begin{IEEEkeywords}
Edge AI, energy efficiency,  convergence analysis,   alternating optimization.
\end{IEEEkeywords}

\section{Introduction}
{Edge Artificial Intelligence (AI) is emerging as a key technology that brings intelligence directly to the network's edge, enabling real-time data processing and decision-making close to the source of data~\cite{li2023high,zhang2018enabling,sonmez2020machine}. This paradigm shifts computation and model inference tasks from centralized cloud servers to edge devices, offering several advantages, including reduced latency, enhanced data privacy, and minimized bandwidth usage~\cite{liu2022resource,liang2020ai}. Edge AI is particularly beneficial in applications, such as autonomous vehicles, smart cities, and healthcare, where immediate data processing and decision-making are crucial~\cite{hong2021ai,zhai2025distributed}. By leveraging local computation resources at the edge, Edge AI enables faster responses and more context-aware services, significantly improving system performance and user experience~\cite{xu2021edge}. 

\subsection{Motivation and Challenges}
Despite the above advantages, deploying Edge AI in practical environments faces significant challenges, particularly related to high energy consumption and prolonged operation time \cite{alom2019state,vestias2020moving,zhai2024over,zhai2025spectral}. Edge devices, such as sensors and cameras, often need to process large amounts of data continuously, leading to substantial power consumption \cite{lavin2016fast}. Limited battery life exacerbates this issue, especially in scenarios where recharging is difficult \cite{lavin2016fast}. Furthermore, many Edge AI tasks, such as federated learning, require multiple rounds of model training, which increases both time and energy demands \cite{howard2017mobilenets}. The problem becomes even more pronounced when Edge AI systems operate in remote or harsh environments, such as rural areas or offshore installations, where resources are scarce and costly to replenish \cite{li2020secured}. These challenges make it difficult to deploy Edge AI systems efficiently in environments constrained by both energy and time.
In real-world edge systems, energy resources are often constrained, and the time allocated for task execution is often limited \cite{zhai2022energy}. Thus, designing Edge AI systems with energy efficiency as a primary focus is critical for ensuring practical and sustainable deployment. 
Given the increasing reliance on edge devices for real-time AI-powered applications, energy-efficient system design is essential for maintaining system performance, extending device lifetime, and supporting continuous operation in dynamic environments. Consequently, energy efficiency has become a vital consideration in Edge AI, driving the need for novel approaches to optimize resource allocation and reduce the system's overall energy footprint \cite{mao2024green}.

\subsection{Related Work}
In response to the pressing need for energy-efficient Edge AI systems, numerous researchers have proposed various approaches to address the challenges of high energy consumption and extended operation time. For example,  \cite{chen2018thriftyedge} proposes joint resource allocation strategies in wireless edge AI, balancing energy efficiency and model accuracy by dynamically adjusting wireless transmission power and computational resources to extend device battery life while enabling efficient AI model training. \cite{li2019edge} focuses on reducing the number of communication rounds required, lowering the energy cost of running federated learning models on edge devices. Furthermore, \cite{deng2020edge} introduces an edge intelligence framework that integrates AI with wireless communication protocols to reduce energy consumption by leveraging 5G networks to enable low-latency AI tasks through efficient data and computation offloading. Moreover, studies \cite{zhang2022scalable,fang2022ai,wang2021towards} have contributed to advancing energy-efficient Edge AI systems, further driving the development of sustainable Edge AI solutions.

Although these works \cite{chen2018thriftyedge,li2019edge,deng2020edge,zhang2022scalable,fang2022ai,wang2021towards} aim to improve the energy efficiency of Edge AI systems, they primarily focus on heuristic approaches that optimize individual procedure of the overall system.  However,  an Edge AI system encompasses multiple processes, including data collection, local computation, and communication. Such coarse-grained methods lack the analytical guarantees necessary to ensure optimal system performance. Furthermore, local improvements may not lead to optimal system-wide outcomes. To fully realize energy-efficient Edge AI systems, there is a pressing need for performance analysis that considers the entire system processes. Such holistic approach would enable more robust optimization strategies that comprehensively enhance energy efficiency and overall system performance.

\subsection{Contributions}
In this paper, we propose a novel approach to enhance overall Edge AI system performance under time and energy constraints. Specifically, we develop a general Edge AI system framework that encompasses the entire process from data collection and local computation to communication during deployment and operation. We conduct a rigorous convergence analysis to quantify the impact of key system parameters, such as the amount of data collected and the number of training rounds, on system learning performance. Based on this analysis, we formulate a new problem to maximize the learning performance while adhering to time and energy limitations.
We explore this optimization problem under both homogeneous and heterogeneous device scenarios, developing low-complexity algorithms based on one-dimensional search and alternating optimization to jointly optimize data collection time and the number of training rounds.
 We summarize our contributions as follows.
\begin{itemize}
	\item  We propose an Edge AI system framework that integrates the entire process of  data collection,  local model computation, and communication. Within this framework, we  model the time and energy consumption  of the system, providing a detailed  representation of the resource requirements throughout the deployment (data collection) and operational phases (model computation and communication).
	\item  Under the proposed framework, we conduct rigorous analysis to  characterize the system's  convergence performance. This analysis quantitatively captures the impact of the amount of data collected and the number of training rounds on the learning performance. Building on this convergence analysis, we formulate a time resource allocation problem to maximize system performance within given time and energy constraints.
	\item We address  the optimization problem in both homogeneous and heterogeneous device scenarios. For the homogeneous case, we exploit the problem’s monotonic structure to reformulate it into a single-variable form solvable via one-dimensional search with global optimality. For the heterogeneous case, we reveal the problem’s block-wise convexity and apply alternating optimization to iteratively refine sensing time and training rounds. Both methods ensure monotonic convergence and low computational complexity.
\end{itemize}

The simulation results confirm the effectiveness of the proposed scheme and provide valuable insights into the behavior of different  scenarios in Edge AI system. Specifically, the numerical results  validate the accuracy of the derived convergence bound. We also obtain valuable design insights  by evaluating the homogeneous and heterogeneous device scenarios.
 Moreover, the comparison with benchmark methods demonstrate that our approach achieves significant performance improvements and attains near-optimal learning performance with limited resources.

The remainder of this paper is organized as follows. In Section II, we provide a detailed description of the data collection, local computation, and communication models for the Edge AI system. In Section III, we analyze the system’s learning performance and formulate the corresponding optimization problem. Section IV discusses and solves the optimization problem under both homogeneous and heterogeneous  scenarios. Section V presents the simulation results. Finally the paper concludes with remarks in Section VI.

\textit{Notation:} We denote sets using calligraphic symbols such as $\mathcal{D}$ for the dataset, and $\mathbb{R}$ and $\mathbb{C}$ for the sets of real and complex numbers, respectively. Scalars are represented by lowercase regular letters, vectors by lowercase bold letters, and matrices by bold capital letters.  $(\cdot)^\ast$, $(\cdot)^\mathrm{T}$, $(\cdot)^\mathrm{H}$, and $(\cdot)^{\dagger}$ are used to denote the conjugate, transpose, conjugate transpose, and pseudoinverse, respectively. The $i$-th entry of a vector $\mathbf{x}$ is denoted by $x_i$, and the $(i,j)$-th entry of a matrix $\mathbf{X}$ is $X_{ij}$. The $l_2$-norm and Frobenius norm are denoted by $|\cdot|_2$ and $|\cdot|_F$, respectively. The expectation operator is represented by $\mathbb{E}[\cdot]$, and the trace of a matrix is denoted by $\mathrm{Tr}(\cdot)$.

	\section{System Model}\label{sysmodel}
Edge AI systems, e.g., federated learning~\cite{amiri2020federated,chen2020joint,qin2021federated,yang2020energy}, have attracted significant attention in recent years due to their potential in distributed intelligent computing. Given the constrained resources at the network edge, it is crucial to design energy-efficient Edge AI systems to maximize learning performance. Existing studies mainly focus on optimizing the operational phase---local model computation and global parameter communication---while often overlooking the critical data acquisition process, resulting in suboptimal overall performance. 

In practice, an Edge AI system involves a complete workflow that spans from data acquisition and preprocessing to local computation and parameter transmission. These stages are inherently interdependent, and optimizing any individual stage in isolation may not yield system-level optimality. Therefore, a holistic framework that jointly models and optimizes data collection, computation, and communication is essential for enhancing the global learning performance across diverse application scenarios. We introduce this unified framework in the following sections.


\subsection{Edge AI Learning  Model}
The goal of  Edge AI system is to minimize an ideal loss function \cite{mohri2019agnostic}:
\begin{align}\label{eq01}
	\min_{\xx \in \mathbb{R}^{D \times 1}} F(\xx) = \frac{1}{M_{\text{total}}} \sum_{m=1}^{M_{\text{total}}} f(\xx; \xi_m),
\end{align}
where $\xx$ represents the $d$-dimensional model parameter vector, $M_{\text{total}}$ is the total number of training samples in the environment, $\xi_m$ is the $m$-th training sample, and $f(\xx; \xi_m)$ is the loss function corresponding to this sample. 

The training data in the environment is collected by the edge devices, with the $k$-th device holding $M_k$ samples and $\sum_{k=1}^{K}M_k=M$. The local dataset of the $k$-th device is denoted as $\mathcal{D}_m = \{\xi_{km}: 1 \leq m \leq M_k\}$. Typically, the  edge devices cannot collect all data from environment, we have $M\leq M_{\text{total}}$. Then the practical  learning task can  be reformulated as
\begin{align}\label{eq02}
	&\tilde{F}(\xx) = \frac{1}{M} \sum_{k=1}^{K} M_k F_k(\xx; \mathcal{D}_k),\\\text{with}  \quad& F_k(\xx; \mathcal{D}_k) = \frac{1}{M_k} \sum_{\xi_{km} \in \mathcal{D}_k} f(\xx; \xi_{km}).
\end{align}

After data acquisition, the edge devices 
update their local models by minimizing their respective local objective functions $F_k$, while the server aggregates the global model based on the updates.  The model $\xx$ is iteratively refined over $I$ rounds of training. For each round $i$ ($1 \leq i \leq I$), the learning procedure involves the following steps:

\begin{itemize}
	
	\item \textit{Model broadcast}: The central server sends the current global model $\xx_i$ to all edge devices.
	\item \textit{Local gradient computation}: Each  device calculates the local gradient based on its dataset. The gradient is expressed as:
	\begin{align}\label{eq03}
		\gg_{k,i} = \nabla F_k(\xx_i; \mathcal{D}_k) \in \mathbb{R}^D,
	\end{align}
	where $\nabla F_k(\xx_i; \mathcal{D}_k)$ is the gradient of $F_k(\cdot)$ evaluated at $\xx = \xx_i$.
	
	\item \textit{Model update}: The edge  devices transmit their local gradients $\gg_{k,i}$  to the central server via wireless communication channels. The server aggregates these gradients to form a global update, denoted as $\gg_i = \frac{1}{M}\sum_{k=1}^{K} M_k \gg_{k,i}$.  The global model is updated using the following rule:
	\begin{align}\label{eq04}
		\xx_{i+1} = \xx_i - {\lambda}{\gg}_i,
	\end{align}
	where   $\lambda$ represents the learning rate.
\end{itemize}

We show the workflow in Fig.~\ref{time_schedule} and summarize the overall framework of  Edge AI system  in Algorithm \ref{alg:FL_framework}.

\begin{figure}[h]
	\centering            
	\includegraphics[width=1\linewidth]{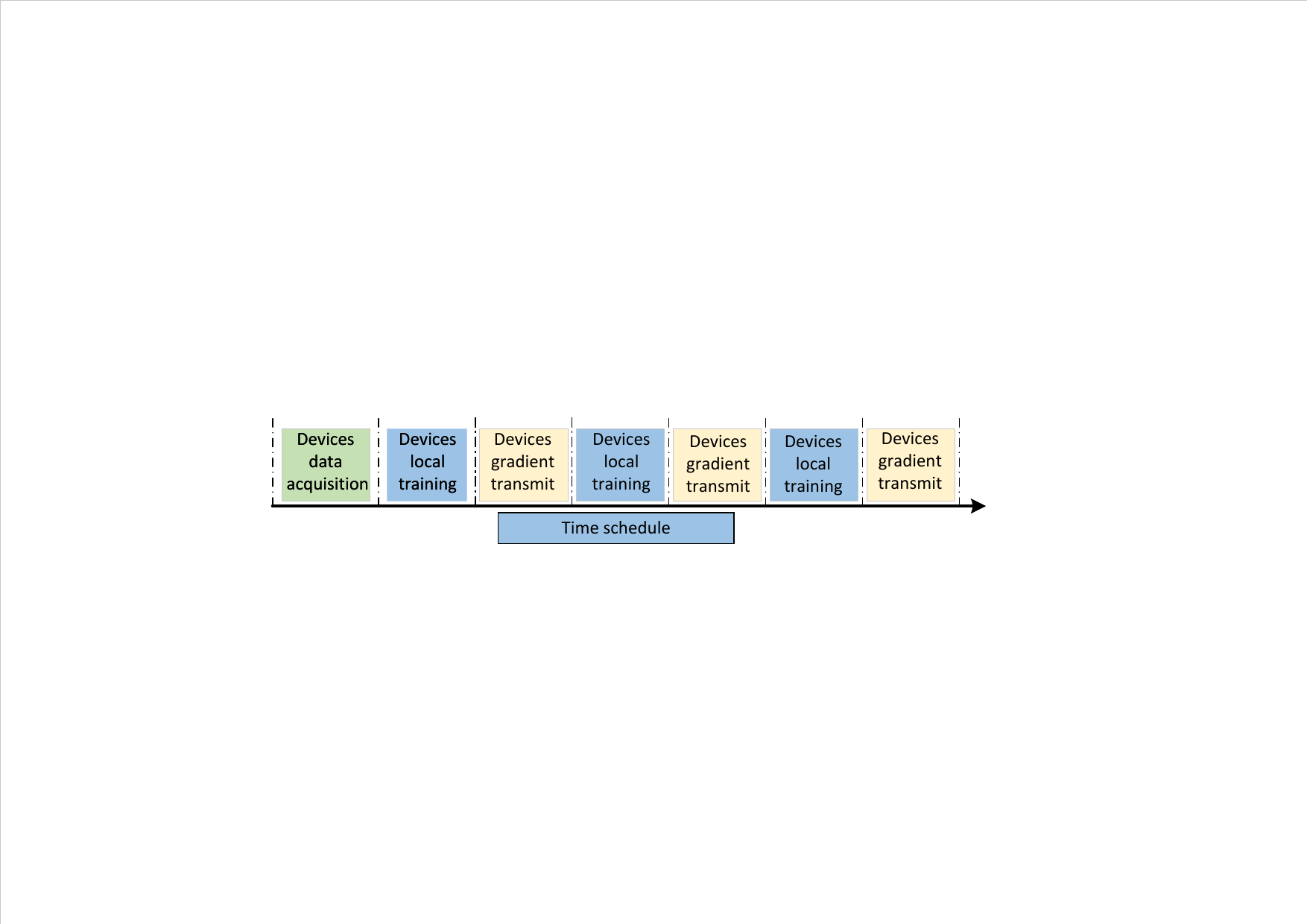}
	\caption{Workflow of Edge AI system.}
	\label{time_schedule}
\end{figure}
\begin{algorithm}[htb]
	\caption{Overall Framework for Edge AI System} 
	\label{alg:FL_framework} 
	\begin{algorithmic}[1] 
		\REQUIRE Training round $I$, data sensing times $\{t_{\text{sens},k}\}_{k=1}^{K}$. 
		\STATE {\textbf{Initialization:}} Edge devices collect data from environment with time duration  $\{t_{\text{sens},k}\}_{k=1}^{K}$.
		\FOR{ $i \in [I]$ }
		\STATE Central server broadcasts the global model $\xx_i$ to edge devices.
		\FOR{$k \in [K]$ in parallel}
		\STATE  Device $k$ computes its local gradient  $ \gg_{k,i}$  using \eqref{eq03};
		\STATE  Device $k$ transmits its local gradient $\gg_{k,i}$ to the central server via wireless communication.
		\ENDFOR
		\STATE Central server aggregates the received gradients from  edge devices to update the  global model using \eqref{eq04}.
		\ENDFOR 
	\end{algorithmic}
\end{algorithm}

\subsection{Data Acquisition, Computation and Communication Model}
Consider an  Edge AI system consisting of a central server and $K$ edge devices.  The  objective of the system is to learn a common  model by collaboration between devices and the central server.
Since raw data is generated in the environment, devices must collect the data to enable local model training. 
Each edge device is equipped with sensors for data collection, computational resources for local model training, and an antenna for communication with the server. The central server coordinates all edge devices over wireless network to perform cooperative model training. Hereinafter, the  model trained on each edge device side is referred to as the local  model, whereas the  model obtained by the central server  through collecting and aggregating all edge devices' local models is referred to as the global  model. 
The operation of the Edge AI system involves three key tasks \cite{mao2024green,shi2020communication}: data acquisition, local computation, and communication. Each task has associated energy consumption and time costs, as described in the following. 

\subsubsection{Data Acquisition}
{
Since raw data are generated in the environment, they need to be collected by edge devices for model training. This task is crucial in scenarios where edge devices have both sensing and computation capabilities. Typical applications include training models for learning activity prediction using physiological data collected by smartwatches and continuously updating AI models for Internet of Things (IoT) applications with newly available local data. Notably, in Edge AI systems, the learning model performance significantly depends on the quality and quantity of training data processed locally~\cite{liu2020data}, which are restricted by the limited battery energy of edge devices.

In Edge AI, data acquisition takes place at the initial stage and is performed only once. In the sensing process, each edge device collects surrounding environmental data using its sensors. The energy consumption during sensing operations is mainly influenced by the sensing mechanism, sampling rate, and sample quality. These factors directly affect the overall power consumption. 

Suppose there are a total of $M_{\text{total}}$ data samples distributed in the environment. A set of $K$ edge devices collaboratively perform data sensing, where device $k$ actually collects $M_k$ samples during the sensing phase. Due to practical constraints such as limited energy and time, the total number of collected samples generally does not exceed the available data in the environment, i.e., $\sum_{k=1}^{K} M_k \leq M_{\text{total}}$.

Let $t_{\text{sens},k}$ be the sensing time of edge device $k$, the volume of samples that edge device $k$  collects within $t_{\text{sens},k}$ is given by
\begin{equation}
	M_k = t_{\text{sens},k} f_{\text{s},k},
\end{equation}
where $f_{\text{s},k}$ is the sampling rate of edge device $k$. 

To reflect the diversity in sensing mechanisms and hardware characteristics, we adopt a generalized energy consumption model:
\begin{equation}
	E_{\text{sens},k} = \eta_k t_{\text{sens},k}^{\delta_k} + \zeta_k,
\end{equation}
where $\eta_k$ is the sensing efficiency coefficient, $\delta_k$ captures the nonlinearity of energy scaling, and $\zeta_k$ is a fixed energy offset such as booting cost\footnote{This parametric formulation allows for flexible modeling of heterogeneous sensing mechanisms. For example, we set $\delta_k = 1$, $\zeta_k = 0$ for cameras; $\delta_k = 2$, $\zeta_k = 0.5$ for radars; and $\delta_k = 1$, $\zeta_k = 0.1$ for lightweight environmental sensors.} \cite{bel2015energy,sharma2008optimal}. 
}

\subsubsection{Local Computation}
 While large datasets are essential for effective model training, transmitting substantial amounts of data from edge devices to the   server presents significant challenges due to constraints on bandwidth and energy resources. Furthermore, for certain applications, privacy concerns prohibit the outsourcing of data. As a result, distributed model training at the edge of the network has emerged as a promising solution. This method facilitates the cooperative training of DNNs across edge devices, eliminating the need to transfer local data to the  server. This has become increasingly famous with the integration of advanced mobile processors, equipped with graphics and neural processing units, into high-end devices.

In the Edge AI system,  each edge device performs local training using its own data for many iterations. Let $f_{\text{c},k}$ denote the computation capacity of edge device $k$, measured in CPU cycles per second. Denote by $t_{\text{comp}, k}$ the computation time of edge device $k$ at each iteration, which is given by
\begin{equation}
t_{\text{comp}, k}= \frac{C_kM_k}{f_{\text{c},k}} ,\label{a3}
\end{equation}
where  $C_k$ is the number of CPU cycles needed to process a sample. Moreover, the energy consumption for local computation  of device $k$ at each iteration is given by \cite{mao2016dynamic}

\begin{equation}
	E_{\text{comp}, k} = \kappa  C_kM_k f_{\text{c},k}^2,\label{a4}
\end{equation}
where $\kappa$ is the effective switched capacitance, which depends on the chip architecture.

\subsubsection{Wireless Communication}

{After the devices complete model training, their local models need to be transmitted to the server for aggregation. The server aggregates the local models from all devices to form a global model and broadcasts it to the devices. 
Let $t_{\text{comm},k}$ denote the time for edge device $k$ to transmit the model parameters at any iteration.. To capture the impact of channel variability (e.g., small-scale fading), we incorporate a deterministic channel quality factor $\theta_k > 0$ into the transmission model. Specifically, the achievable transmission rate for device $k$ at each iteration is given by
\begin{equation}
	r_{k} = b_k \log_2\left(1 + \frac{p_k \bar{g}_k \theta_k}{N_0 b_k} \right),
\end{equation}
where $b_k$ is the bandwidth allocated to device $k$, $p_k$ is the transmit power, $N_0$ is the noise power spectral density, and $\bar{g}_k$ denotes the average large-scale path gain. Moreover,  $\theta_k$ reflects device-specific small-scale fading or local propagation effects, and enables deterministic modeling of heterogeneous channel conditions.

The corresponding transmission time is given by
\begin{equation}
	t_{\text{comm},k} = \frac{d}{r_{k}},
\end{equation}
where $d$ is the dimension of the model parameters. The transmission energy  of each iteration is given by
\begin{equation}
	E_{\text{comm},k} = t_{\text{comm},k}  p_k.
\end{equation}}

\section{Problem Formulation}

\subsection{Convergence Analysis}

To derive an effective performance metric, we conduct an in-depth analysis of the considered Edge AI system. 
From a system-wide perspective, the learning performance should be closely associated with following two key aspects.
	\begin{itemize}
\item \textbf{The amount of data utilized for training}:   If an Edge AI system can access a greater amount of input data, the performance of the resulting model will  improve. Experimentally, the performance of models trained with different amounts of data can differ significantly. Specifically, \cite{krizhevsky2012imagenet}, \cite{sun2017revisiting} demonstrate that as the amount of data increases, deep learning models exhibit better generalization performance. They experiment by increasing the training data size for different models and observe significant improvements in accuracy.
Mathematically, the conclusion in \cite[Theorem 3.1]{friedlander2012hybrid} demonstrates that missing data increases the norm of the gradient update error during the learning process.

\item \textbf{The number of training rounds}: With a fixed amount of data, increasing the number of training rounds generally enhances the model’s performance. In the early stages of training, the performance gain is significant, as each additional round helps the model capture more complex patterns. However, as the number of training rounds continues to grow, the model gradually approaches a stable state, and the rate of performance improvement slows down. At this point, as analyzed in \cite{wang2020tackling} and \cite{reddi2020adaptive}, while additional rounds may still provide marginal gains, the benefit becomes minimal as the model has largely converged.  Therefore, it is crucial to balance the number of training rounds to maximize learning performance while avoiding unnecessary resource usage and mitigating model divergence.
	\end{itemize}

To quantitatively guide the system design and characterize the impact of data volume and training rounds, we conduct a convergence analysis of the Edge AI system. 
Our convergence analysis is based on the following assumptions:
\begin{assumption}\label{as1}
	\textbf{Lipschitz Continuity:} The function \( F(\cdot) \) is continuously differentiable, and its gradient \( \nabla F(\cdot) \) satisfies the Lipschitz continuity condition with a constant \( L > 0 \). Specifically, for any \( \xx, \yy \in \mathbb{R}^n \),
	\begin{align}
		\norm{\nabla F(\xx)-\nabla F(\yy)} \leq L \norm{\xx-\yy}.
	\end{align}
	This implies that the gradient does not change too rapidly, which guarantees  certain smoothness in  \( F(\cdot) \).
\end{assumption}

\begin{assumption}\label{as2}
	\textbf{Strong Convexity:} The function \( F(\cdot) \) is strongly convex with a parameter \( \mu > 0 \), meaning that for any \( \xx, \yy \in \mathbb{R}^n \),
	\begin{align}
		F(\xx) \geq F(\yy) + (\xx-\yy)^T \nabla F(\yy) + \frac{\mu}{2} \norm{\xx-\yy}^2.\label{12}
	\end{align}
	This condition ensures that the function exhibits a lower-bound curvature, leading to unique minimizers and a well-behaved optimization landscape.
\end{assumption}
\begin{assumption}\label{as3}
	\textbf{Bounded Gradient:} The gradient with respect to any training sample, $\nabla f(\cdot)$, is upper bounded by 
	\begin{align}
		\norm{f(\xx,{\xi}_m)}^2\leq \beta_1+\beta_2\norm{\nabla F(\xx)}^2,\forall m
	\end{align}
	for some constants $\beta_1 \geq 0$ and $\beta_2 \geq 0.$ This condition implies that the sample-wise gradient is bounded as a function of the true gradient of full sample. 
\end{assumption}

{Assumptions \ref{as1}-\ref{as3} are widely adopted in the literature on decentralized optimization and edge learning algorithms, see  \cite{koloskova2019decentralized,10818523,9451567,wang2018cooperative,10506083,zhai2025decentralized,zhai2024uav}.
Although these assumptions  may not strictly hold in practical deep learning models, especially for non-convex objectives such as those arising in convolutional neural networks (CNNs), they provide a widely accepted theoretical foundation for analyzing convergence behavior. 

Importantly, the goal of our convergence analysis is not to guarantee exact behavior under strict assumptions, but rather to derive a principled performance surrogate that guides resource allocation decisions. As shown in our experiments in Section~\ref{secV}, the derived convergence bound remains a reliable and consistent indicator of achievable learning performance, even when these assumptions are relaxed. Moreover, our optimization framework does not rely on explicit knowledge of constants such as $\mu$, $L$, or $\beta_2$, which further enhances its robustness in real-world deployment scenarios.}

\begin{theorem}\label{pro2}
	{Under Assumptions \ref{as1}-\ref{as3}, with  learning rate $\lambda=\frac{1}{L}\text{total}$, we have the following  ergodic convergence bound }
\begin{align}\label{421}
	&F({\xx_{I}}) - F(\xx^\ast) \leq \Psi(\mathcal{B})^I[F(\xx_0)-F(\xx)]\notag\\&+2\left[ \frac{M_{\textnormal{total}} - |\mathcal{B}|}{M_{\textnormal{total}}} \right]^2\frac{\beta_1}{L}\frac{1-\Psi(\mathcal{B})^I}{1-\Psi(\mathcal{B})}
\end{align}
	where \(\mathcal{B}= \{\mathcal{D}_k,\forall k \}\) is the set of overall collected data, and $\Psi(\mathcal{B})=\left[(1-\frac{\mu}{L})+4\left(\frac{M_{\textnormal{total}}-|\mathcal{B}|}{M_{\textnormal{total}}}\right)^2\beta_2\right]$.
\end{theorem}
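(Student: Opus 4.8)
The plan is to recognize that the aggregated update $\gg_i=\frac{1}{M}\sum_{k=1}^{K}M_k\gg_{k,i}$ is exactly the gradient of the collected-data objective $\tilde{F}$, so that the rule \eqref{eq04} is an \emph{inexact} gradient-descent step on the ideal objective $F$, perturbed by a deterministic error $\mathbf{e}_i:=\gg_i-\nabla F(\xx_i)$ caused by the uncollected samples. The whole argument then reduces to three moves: (i) controlling $\|\mathbf{e}_i\|^2$ via Assumption~\ref{as3}, (ii) turning the smoothness-based descent inequality into a contraction in the suboptimality gap using Assumptions~\ref{as1} and~\ref{as2}, and (iii) unrolling the resulting scalar recursion over $I$ rounds.

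First I would bound the gradient error. Writing $\nabla f_m:=\nabla f(\xx_i;\xi_m)$, the gap between $\gg_i=\frac{1}{|\mathcal{B}|}\sum_{m\in\mathcal{B}}\nabla f_m$ and $\nabla F(\xx_i)=\frac{1}{M_{\text{total}}}\sum_{m=1}^{M_{\text{total}}}\nabla f_m$ decomposes into a renormalization part, proportional to $\sum_{m\in\mathcal{B}}\nabla f_m$ with prefactor $\frac{M_{\text{total}}-|\mathcal{B}|}{|\mathcal{B}|M_{\text{total}}}$, and a missing-mass part $\frac{1}{M_{\text{total}}}\sum_{m\notin\mathcal{B}}\nabla f_m$. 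Applying $\|\mathbf{a}-\mathbf{b}\|^2\le 2\|\mathbf{a}\|^2+2\|\mathbf{b}\|^2$, then Cauchy--Schwarz (Jensen) on each sum, and finally Assumption~\ref{as3} termwise, each part is bounded by $\big(\frac{M_{\text{total}}-|\mathcal{B}|}{M_{\text{total}}}\big)^2(\beta_1+\beta_2\|\nabla F(\xx_i)\|^2)$, so that $\|\mathbf{e}_i\|^2\le 4\big(\frac{M_{\text{total}}-|\mathcal{B}|}{M_{\text{total}}}\big)^2(\beta_1+\beta_2\|\nabla F(\xx_i)\|^2)$. This splitting is precisely where the factor $4$ in $\Psi(\mathcal{B})$ and in the additive term of \eqref{421} originates.

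Next I would invoke the descent lemma from Assumption~\ref{as1} at $\xx_{i+1}=\xx_i-\frac{1}{L}\gg_i$. Using the exact identity $\langle\nabla F(\xx_i),\gg_i\rangle=\tfrac12\big(\|\nabla F(\xx_i)\|^2+\|\gg_i\|^2-\|\mathbf{e}_i\|^2\big)$, the $\|\gg_i\|^2$ terms cancel and one obtains the clean one-step bound $F(\xx_{i+1})\le F(\xx_i)-\frac{1}{2L}\|\nabla F(\xx_i)\|^2+\frac{1}{2L}\|\mathbf{e}_i\|^2$. Substituting the error bound and then handling the two resulting $\|\nabla F(\xx_i)\|^2$ terms asymmetrically --- the negative one via the Polyak--\L ojasiewicz inequality $\|\nabla F\|^2\ge 2\mu(F-F(\xx^\ast))$ implied by Assumption~\ref{as2}, and the positive $\beta_2$ term via the smoothness bound $\|\nabla F\|^2\le 2L(F-F(\xx^\ast))$ --- collapses everything into the recursion $a_{i+1}\le\Psi(\mathcal{B})\,a_i+\tfrac{2\beta_1}{L}\big(\frac{M_{\text{total}}-|\mathcal{B}|}{M_{\text{total}}}\big)^2$ with $a_i:=F(\xx_i)-F(\xx^\ast)$. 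Unrolling over the $I$ rounds and summing the geometric series gives $a_I\le\Psi(\mathcal{B})^I a_0+\tfrac{2\beta_1}{L}\big(\frac{M_{\text{total}}-|\mathcal{B}|}{M_{\text{total}}}\big)^2\frac{1-\Psi(\mathcal{B})^I}{1-\Psi(\mathcal{B})}$, which is exactly \eqref{421}.

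The main obstacle I anticipate is the error-decomposition step: one must cleanly separate the renormalization effect from the genuinely missing samples and bound both uniformly in terms of the true gradient, which is what makes Assumption~\ref{as3} indispensable and what pins down the constants $4$ and $2$. A secondary point requiring care is that \eqref{421} describes a genuine contraction only when $\Psi(\mathcal{B})<1$, i.e. when the missing-data fraction is small enough that $4\big(\frac{M_{\text{total}}-|\mathcal{B}|}{M_{\text{total}}}\big)^2\beta_2<\mu/L$; moreover the geometric-sum step presumes $\Psi(\mathcal{B})\neq 1$. I would state these conditions explicitly so that the bound is well defined.
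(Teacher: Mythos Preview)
Your proposal is correct and follows essentially the same route as the paper's proof: identical error decomposition of $\mathbf{e}_i$ into the renormalization and missing-mass pieces, the same one-step descent inequality $F(\xx_{i+1})\le F(\xx_i)-\tfrac{1}{2L}\|\nabla F(\xx_i)\|^2+\tfrac{1}{2L}\|\mathbf{e}_i\|^2$, the same asymmetric use of PL (lower bound) and smoothness (upper bound) on the two $\|\nabla F(\xx_i)\|^2$ terms, and the same geometric unrolling. The only cosmetic differences are that you reach the descent inequality via the polarization identity rather than direct expansion, and you bound $\|\mathbf{e}_i\|^2$ via $\|\mathbf{a}-\mathbf{b}\|^2\le 2\|\mathbf{a}\|^2+2\|\mathbf{b}\|^2$ whereas the paper uses a triangle-inequality-then-square chain; both yield the identical constant $4$, and your explicit caveat that $\Psi(\mathcal{B})<1$ (and $\neq 1$) is needed for the bound to be meaningful is a welcome addition the paper leaves implicit.
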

\begin{proof}
	Please refer to Appendix A.
\end{proof}

{This theorem highlights that the convergence rate of Edge AI system  depends on the size of the collected dataset $|\mathcal{B}|$, and in turn, depends on  the sensing time $t_{\text{sens},k}$. A longer sensing time leads to more data, reducing $\left( \frac{M_{\textnormal{total}} - |\mathcal{B}|}{M_{\textnormal{total}}} \right)^2$ and subsequently tightening the bound. Not only does this accelerate  convergence but also improves generalization.}
With this convergence bound in \eqref{421}, we formulate problem to conduct system optimization in the next subsection.

\subsection{Problem Formulation}
According to the convergence analysis  in Theorem \ref{pro2}, we see that the  convergence bound on the right-hand side (RHS) of \eqref{421} converges  with the speed  $\Psi(\mathcal{B})^I$ when $\Psi(\mathcal{B})^I<1$. It can serve as a performance metric for optimizing the learning design. 
  Therefore, we jointly optimize each device's  the data collection time $\mathbf{t}_{\text{sens}}=[{t}_{\text{sens},1},{t}_{\text{sens},2}.\cdots,{t}_{\text{sens},K}]^\T$ and training rounds $I$ to minimize performance metric $\Psi(\mathcal{B})^I$. Taking the logarithm, we obtain
	\begin{subequations}\label{37}
	\begin{align}
		&\min_{\mathbf{t}_{\text{sens}},I}I\log\left(\alpha+\beta\left(\frac{M_{\text{total}}-|\mathcal{B}|}{M_{\text{total}}}\right)^2\right)\\
		&\text{s.t.}~\max_k\{t_{\text{sens},k}\}+I \max_k\{t_{\text{comp}, k}+t_{\text{comm}, k}\} \leq T_{\text{total}},\label{ttotal}\\
		&\quad~\,~~~\sum_{k=1}^{K}\left(E_{\text{sens}, k}+I	(E_{\text{comp}, k}+	E_{\text{comm}, k})\right)\leq E_{\text{total}},\label{etotal}
	\end{align}
\end{subequations}
{where $\alpha = 1 - \frac{\mu}{L}$ and $\beta = 4\beta_2$, which characterize the curvature of the loss function and the gradient variance due to data heterogeneity, respectively. A smaller $\alpha$ corresponds to stronger convexity (often associated with simpler models or better-conditioned data), while a larger $\beta$ arises in more heterogeneous or non-IID local datasets where local gradients deviate more from the global one.}

Here, constraint \eqref{ttotal} means that the total time spent on sensing, computation, and communication across all training rounds must not exceed the maximum time duration,  which  ensures that the Edge AI system operates within the allotted time; constraint \eqref{etotal} means that each device 
has a limited energy budget for all operations across training rounds, and this constraint ensures that the energy consumed by each device remains within its total available energy.

{Importantly, \eqref{37} inherently captures the trade-off between global model accuracy and device resource constraints: minimizing the convergence bound (i.e., improving accuracy) requires either more training rounds $I$ or a larger volume of collected data $|\mathcal{B}|$, but both directly increase the time and energy consumed by each device as constrained in~\eqref{ttotal} and~\eqref{etotal}. This balance ensures that the system can achieve optimal performance while respecting practical limitations on time and energy budgets at the device level.}

Problem  \eqref{37} seeks to balance the trade-offs among training time and energy consumption in Edge AI to optimize the learning performance. By minimizing the performance metric $\Psi(\mathcal{B})^T$, we can achieve efficient learning performance while adhering to the system's practical constraints.

Unfortunately, problem  \eqref{37} is a highly non-convex optimization problem, involving the joint optimization over multiple variables. Due to the complexity and interdependencies among the variables, finding the optimal solution is particularly challenging. In what follows, we approach the problem under two distinct scenarios: the case of homogeneous devices and the case of heterogeneous devices. By analyzing these two settings separately, we develop tailored algorithms and solution strategies that effectively exploit the specific characteristics of each scenario.


\section{System Optimization}
\subsection{Optimization in Homogeneous Scenario}
In this subsection, we consider the optimization problem under the scenario of homogeneous devices. In this case, each edge device shares identical specifications, including sensing frequency, sensing power, computational capability, switch capacitance, communication bandwidth, and power. This uniformity allows us to reduce the complexity of the original problem by aggregating variables and constraints,  leading to a more streamlined formulation. Specifically, the optimization reduces to finding a common solution that applies to all devices. In this case, problem \eqref{37} can be recast as
\begin{subequations}
	\begin{align}
		&\min_{{t}_{\text{sens}},I}I\log\left(\alpha+\beta\left(\frac{M_{\text{total}}-|\mathcal{B}|}{M_{\text{total}}}\right)^2\right)\\
		& ~~\text{s.t.}~~t_{\text{sens}}+I(t_{\text{comp}}+t_{\text{comm}}) \leq T_{\text{total}},\\
		&\quad~\,~~~K\left(E_{\text{sens}}+I(E_{\text{comp}}+	E_{\text{comm}})\right)\leq E_{\text{total}}.
	\end{align}
\end{subequations}

Here, the problem \eqref{37} reduces to optimizing only two variables $t_\text{sens}$ and $I$. By expressing the other parameters, including  $\mathcal{B}$, $t_{\text{comp}}$, $E_{\text{sens}}$ and $E_{\text{comp}}$, in  terms of 
$t_\text{sens}$, we have
\begin{subequations}\label{19}
	\begin{align}
		&\min_{{t}_{\text{sens}},I}I\log\left(\alpha+\beta\left(\frac{M_{\text{total}}-K{t}_{\text{sens}}f_\text{s}}{M_{\text{total}}}\right)^2\right)\\
		& ~~\text{s.t.}~~t_{\text{sens}}+I({{t}_\text{sens}}f_\text{s}C_k/{f_c}+t_{\text{comm}}) \leq T_{\text{total}},\label{30b}\\
		&\quad~\,~~~K\left(t_{\text{sens}}w+I( {t}_\text{sens}f_\text{s}\kappa  C_k  f_{\text{c}}^2+	E_{\text{comm}})\right)\leq E_{\text{total}}.\label{30c}
	\end{align}
\end{subequations}

By reorganizing constraints \eqref{30b} and \eqref{30c}, problem \eqref{19} can be rewritten as
\begin{subequations}\label{31}
	\begin{align}
		&\min_{{t}_{\text{sens}},I}I\log\left(\alpha+\beta\left(\frac{M_{\text{total}}-K{t}_{\text{sens}}f_\text{s}}{M_{\text{total}}}\right)^2\right)\label{31a}\\
		& ~~\text{s.t.}~~I\leq \frac{T_{\text{total}}-t_{\text{sens}}}{{{t}_\text{sens}}f_\text{s}C_k/{f_c}+t_{\text{comm}}},
		\label{310b}
		\\
		&\quad~\,~~~I\leq
		\frac{(E_{\text{total}}/K-t_{\text{sens}}w)}{{t}_\text{sens}f_\text{s}\kappa  C_k  f_{\text{c}}^2+	E_{\text{comm}}}.\label{310c}
	\end{align}
\end{subequations}
Constraints \eqref{310b}  and \eqref{310c} indicate that the number of training rounds $I$ needs to be smaller  than a certain value related to $t_{\text{sens}}$. Since  objective function \eqref{31a} is monotonically decreasing as $I$ increases, at least one of  constraints \eqref{310b} and \eqref{310c}  holds with equality at the optimal point. Therefore, the optimal solution of problem \eqref{31} can be found by solving the following  two problems. The first problem is given by
\begin{subequations}\label{311}
	\begin{align}
		&\min_{{t}_{\text{sens}}}\frac{T_{\text{total}}-t_{\text{sens}}}{{{t}_\text{sens}}f_\text{s}C_k/{f_c}+t_{\text{comm}}}\notag\\&\qquad\times\log\left(\alpha+\beta\left(\frac{M_{\text{total}}-K{t}_{\text{sens}}f_\text{s}}{M_{\text{total}}}\right)^2\right)\\
			& ~~\text{s.t.}~~{t}_{\text{sens}}\leq \frac{M_{\text{total}}}{Kf_\text{s}}.
	\end{align}
\end{subequations}

The second problem is given by
\begin{subequations}\label{333}
	\begin{align}
		&\min_{{t}_{\text{sens}}}	\frac{(E_{\text{total}/K}-t_{\text{sens}}w)}{{t}_\text{sens}f_\text{s}\kappa  C_k  f_{\text{c}}^2+	E_{\text{comm}}}\notag\\&\qquad\times\log\left(\alpha+\beta\left(\frac{M_{\text{total}}-K{t}_{\text{sens}}f_\text{s}}{M_{\text{total}}}\right)^2\right)\\
		& ~~\text{s.t.}~~{t}_{\text{sens}}\leq \frac{M_{\text{total}}}{Kf_\text{s}}.
	\end{align}
\end{subequations}

Since \eqref{311} and \eqref{333} are non-convex with a bounded optimization variable, we can solve these two problems using  one-dimensional search. After solving the problems, the solution that minimizes the objective function between problems  \eqref{311} and \eqref{333} is selected as the optimal solution.

The complexity of this algorithm is primarily determined by the  one-dimensional search method, which is $\mathcal{O}(M)$, where $M$ represents  the number of search variables.

\subsection{Optimization in Heterogeneous Scenario}

In real-world edge AI systems, devices typically possess varying sensing, computation, and communication capabilities. This heterogeneity can arise from differences in hardware specifications, energy constraints, or network conditions. In this section, we extend our analysis to a more generalized scenario where  devices are heterogeneous. Define $\mathbf{t}_{\text{sens}}=[{t}_{\text{sens},1},\cdots,{t}_{\text{sens},K}]^\T$ and $\ff_\text{s}=[f_{s,1},\cdots,f_{s,K}]$, then our problem becomes
\begin{subequations}\label{34}
	\begin{align}
		&\min_{\mathbf{t}_{\text{sens}},I}I\log\left(\alpha+\beta\left(\frac{M_{\text{total}}-\mathbf{t}_{\text{sens}}^\T\ff_{\text{s}}}{M_{\text{total}}}\right)^2\right)\label{34a}\\
			& \text{s.t.}~\max_k\{{t_{\text{sens},k}}\}+I\max_k\{{{t}_{\text{sens},k}}f_{\text{s},k}C_k/{f_{\text{c},k}}+t_{\text{comm},k}\}\notag\\&\qquad\qquad\leq T_{\text{total}},\label{34b}\\
	&\quad~\,~\sum_{k=1}^{K}\left(t_{\text{sens},k}w_k+I( {t}_{\text{sens},k}f_{\text{s},k}\kappa  C_k  f_{\text{c},k}^2+	E_{\text{comm},k})\right)\notag\\&\qquad\qquad\leq E_{\text{total}}.\label{34c}
	\end{align}
\end{subequations}

This problem is  highly non-convex, with both the objective function \eqref{34a} and  constraints \eqref{34b} and \eqref{34c} exhibiting non-convexity. To address this challenge, we adopt an alternating optimization approach \cite{bezdek2002some}, where the number of iterations $I$ and the sensing time $\mathbf{t}_{\text{sens}}$ are decoupled. By doing so, we decompose the original problem into two alternately solved  subproblems. This strategy allows us to iteratively optimize one variable while holding the other fixed, progressively refining the solution at each step. The resulting subproblems are presented as follows.
\subsubsection{Optimizing $I$ with fixed $\mathbf{t}_{\text{sens}}$}

 Given fixed  $\mathbf{t}_{\text{sens}}$,  problem \eqref{34} reduces to focusing only on the variable $I$. The simplified problem can be expressed as 
\begin{subequations}\label{35}
	\begin{align}
		&\min_{I}I\log\left(\alpha+\beta\left(\frac{M_{\text{total}}-|\mathbf{t}_{\text{sens}}^\T\ff_{\text{s}}|}{M_{\text{total}}}\right)^2\right)\label{3434a}\\
		& \text{s.t.}~~\max_k\{{t_{\text{sens},k}}\}+I\max_k\{{{t}_{\text{sens},k}}f_{\text{s},k}C_k/{f_{\text{c},k}}+t_{\text{comm},k}\}\notag\\&\qquad\qquad\leq T_{\text{total}},\label{334b}\\
		&\quad~\,~~\sum_{k=1}^{K}\left(t_{\text{sens},k}w_k+I( {t}_{\text{sens},k}f_{\text{s},k}\kappa  C_k  f_{\text{c},k}^2+	E_{\text{comm},k})\right)\notag\\&\qquad\qquad\leq E_{\text{total}}.\label{343c}
	\end{align}
\end{subequations}

Since $\log\left(\alpha+\beta\left(\frac{M_{\text{total}}-|\mathbf{t}_{\text{sens}}^\T\ff_{\text{s}}|}{M_{\text{total}}}\right)^2\right)<0$, this subproblem can be further simplified to maximize $I$. The optimization task now revolves around finding the biggest number of rounds $I$ that satisfies the system constraints. This leads to the following transformation of  \eqref{35}
\begin{subequations}\label{36}
	\begin{align}
		&\max\quad  I\label{34a34a}\\
		& \text{s.t.}~~\max_k\{{t_{\text{sens},k}}\}+I\max_k\{{{t}_{\text{sens},k}}f_{\text{s},k}C_k/{f_{\text{c},k}}+t_{\text{comm},k}\}\notag\\&\qquad\qquad\leq T_{\text{total}},\label{3a34b}\\
		&\quad~\,~~~\sum_{k=1}^{K}\left(t_{\text{sens},k}w_k+I( {t}_{\text{sens},k}f_{\text{s},k}\kappa  C_k  f_{\text{c},k}^2+	E_{\text{comm},k})\right)\notag\\&\qquad\qquad\leq E_{\text{total}}.\label{343ac}
	\end{align}
\end{subequations}

Problem \eqref{36} is  convex, as all the constraints involving 
$I$ are linear. Therefore, solving this is straightforward. We only need to check the constraints \eqref{3a34b} and \eqref{343ac} to find the maximum possible value of $I$.
\subsubsection{Optimizing  $\mathbf{t}_{\textnormal{sens}}$ with fixed $I$}
Given fixed $I$, problem \eqref{34} simplifies, as we now only need to focus on the remaining optimization variables related to the sensing time $\mathbf{t}_{\text{sens}}$. In this case, problem \eqref{34} reduces to
\begin{subequations}\label{374}
	\begin{align}
		&\min_{\mathbf{t}_{\text{sens}}}\log\left(\alpha+\beta\left(\frac{M_{\text{total}}-\mathbf{t}_{\text{sens}}^\T\ff_{\text{s}}}{M_{\text{total}}}\right)^2\right)\label{347a}\\
		& \text{s.t.}~~\max_k\{{t_{\text{sens},k}}\}+I\max_k\{{{t}_{\text{sens},k}}f_{\text{s},k}C_k/{f_{\text{c},k}}+t_{\text{comm},k}\}\notag\\&\qquad\qquad\leq T_{\text{total}},\label{374b}\\
		&\quad~\,~~~\sum_{k=1}^{K}\left(t_{\text{sens},k}w_k+I( {t}_{\text{sens},k}f_{\text{s},k}\kappa  C_k  f_{\text{c},k}^2+	E_{\text{comm},k})\right)\notag\\&\qquad\qquad\leq E_{\text{total}}.\label{374c}
	\end{align}
\end{subequations}

The objective function \eqref{347a} is non-convex. However, considering that the logarithmic term $\log\left(\alpha+\beta\left(\frac{M_{\text{total}}-\mathbf{t}_{\text{sens}}^\T\ff_{\text{s}}}{M_{\text{total}}}\right)^2\right)$ is decreases monotonically  with respect to $\mathbf{t}_{\text{sens}}^\T\ff_{\text{s}}$.
Thus, we transform the original non-convex problem  \eqref{374} into a simpler optimization problem that maximizes $\mathbf{t}_{\text{sens}}^\T\ff_{\text{s}}$
under the given constraints \eqref{374b} and \eqref{374c}.  Problem \eqref{374} can now be rewritten as
\begin{subequations}\label{374s}
	\begin{align}
		&\max_{\mathbf{t}_{\text{sens}}}\quad \mathbf{t}_{\text{sens}}^\T\ff_{\text{s}}\label{347sa}\\
		& \text{s.t.}~~\max_k\{{t_{\text{sens},k}}\}+I\max_k\{{{t}_{\text{sens},k}}f_{\text{s},k}C_k/{f_{\text{c},k}}+t_{\text{comm},k}\}\notag\\&\qquad\qquad\leq T_{\text{total}},\label{374sb}\\
		&\quad~\,~\sum_{k=1}^{K}\left(t_{\text{sens},k}w_k+I( {t}_{\text{sens},k}f_{\text{s},k}\kappa  C_k  f_{\text{c},k}^2+	E_{\text{comm},k})\right)\notag\\&\qquad\qquad\leq E_{\text{total}}.\label{374sc}
	\end{align}
\end{subequations}

Note that the  objective function \eqref{347sa} and constraint \eqref{374sc} are linear with respect to $\mathbf{t}_{\text{sens}}$, and thus they are convex functions.  Therefore, problem \eqref{374s}  is a standard convex optimization problem and can be efficiently solved using existing convex optimization tools, such as CVX \cite{grant2014cvx}. The convexity ensures that these tools can find the global optimum for \eqref{374s}, providing a straightforward way to solve the optimization formulation.
\subsubsection{Alternating Optimization Iteration}
To solve the non-convex optimization problem \eqref{34}, we utilize an alternating optimization method, which iteratively optimizes over the variables $I$ and $\mathbf{t}_{\text{sens}}$.  We summarize the proposed alternating process in Algorithm \ref{alg}.
	\begin{algorithm}[h]
	\caption{Alternating Algorithm for Solving \eqref{34}}
	\label{alg} 
	\begin{algorithmic}[1] 
		\STATE {\textbf{Initialize:}} The initial values $\mathbf{t}_{\text{sens}}^{(0)}$ and $I^{(0)}$,  iteration index $t=1$,  predefined  precision $\epsilon$.
		\REPEAT 
		\STATE With given $\mathbf{t}_{\text{sens}}^{(t)}$, compute $I^{(t+1)}$ based on \eqref{36}.
		\STATE With given $I^{(t+1)}$, compute $\mathbf{t}_{\text{sens}}^{(t+1)}$ by solving problem \eqref{374s}.
		\STATE Calculate  the value $L^{(t+1)}$ of objective function in \eqref{34}.
		\STATE $t = t + 1$.
		\UNTIL $\left|(L^{(t)} - L^{(t-1)})/L^{(t-1)}\right| \leq \epsilon$
		\ENSURE $\{I,\mathbf{t}_{\text{sens}}\}$.
	\end{algorithmic}
\end{algorithm}

We now analyze the computational complexity of {Algorithm \ref{alg}}. The primary contributor to the complexity is Step 4, where we use the convex optimization tool CVX \cite{grant2014cvx}. The standard CVX tool implements an interior-point method, which incurs the complexity of $\mathcal{O}(n^{3.5}\log(1/\varepsilon))$, 
where $n$ denotes the number of optimization variables and $\varepsilon$ is the target accuracy of the solution. 
As a result, the complexity of Step~4 is $\mathcal{O}(K^{3.5})$, where $K$ is the number of edge devices.
Therefore, the total complexity of the algorithm is $\mathcal{O}(I_{\text{loop}}K^{3.5})$, where $I_{\text{loop}}$ represents the number of iterations required to meet the precision criteria. Given the relatively low complexity, the proposed algorithm is computationally efficient and well-suited for fast system deployment and optimization in real-world edge AI scenarios.

The proposed alternating optimization follows the block coordinate descent (BCD) framework for solving problem~\eqref{34}. 
Specifically, let the objective function of~\eqref{34} be denoted as 
\[
L(\mathbf{t}_{\text{sens}}, I) = I \log\!\left(\alpha + \beta\!\left(\frac{M_{\text{total}} - \mathbf{t}_{\text{sens}}^{T}\mathbf{f}_{s}}{M_{\text{total}}}\right)^{\!2}\right).
\]
At iteration $t$, given the current variables $\{\mathbf{t}_{\text{sens}}^{(t)}, I^{(t)}\}$, the algorithm performs the following two updates:
\begin{align}
	I^{(t+1)} &= \arg\min_{I}\; L(\mathbf{t}_{\text{sens}}^{(t)}, I), \label{eq:updateI}\\
	\mathbf{t}_{\text{sens}}^{(t+1)} &= \arg\min_{\mathbf{t}_{\text{sens}}}\; L(\mathbf{t}_{\text{sens}}, I^{(t+1)}), \label{eq:updatetsens}
\end{align}
subject to the corresponding time and energy constraints in~\eqref{34b}–\eqref{34c}. 
Each subproblem~\eqref{eq:updateI} and~\eqref{eq:updatetsens} is convex and can be optimally solved. 
Therefore, following the standard BCD convergence theory~\cite{razaviyayn2013unified}, the objective value satisfies
\[
L(\mathbf{t}_{\text{sens}}^{(t+1)}, I^{(t+1)}) 
\le L(\mathbf{t}_{\text{sens}}^{(t)}, I^{(t)}),
\]
which indicates that the sequence $\{L^{(t)}\}$ is monotonically non-increasing. 
Consequently, the alternating updates converge to a stationary point of~\eqref{34}.

\section{Numerical Results}\label{secV}

\subsection{Parameter Settings}
{We perform an image classification task using the CIFAR-10 dataset  \cite{krizhevsky2009learning}. From the original dataset, we set that there are totally 20,000 samples in the environment for model training (i.e., $M_\text{total}=20000$) and use 10,000 samples for accuracy validation. We divide the $K=20$ edge devices into 10 equally sized groups, and each group is assigned data samples exclusively from one  class. For the neural network configuration, we utilize a convolutional neural network (CNN)\footnote{Our system model and optimization framework are also applicable to large-scale models such as LLMs \cite{naveed2023comprehensive}, although additional system-level considerations (e.g., memory, latency, and model compression) may be needed to enable practical deployment.
	} comprising two convolutional layers (5 × 5 kernels) with 10 and 20 filters, respectively, each followed by 2 × 2 max-pooling layers. After these layers, a batch normalization layer is added, followed by a fully connected layer with 50 units, using ReLU as the activation function, and a softmax output layer. In total, the model contains 21,880 parameters, and the cross-entropy loss function is employed for training. The model is trained using an NVIDIA RTX 3060Ti GPU.

In our Edge AI system setup, we consider a configuration consisting of $K=20$ edge devices and a single edge server.  The path loss model is  $40 + 30\log_{10}(\gamma)$, where $\gamma$ is the distance in meters, the effective  switched capacitance is set to $\kappa=10^{-11}$, and the noise power spectral density is set to $N_0=4 \times 10^{-11}$ W/Hz. The optimization parameters are chosen as $\alpha = 0.5$ and $\beta = 0.5$ for performance evaluation. For the homogeneous device scenario, each device operates with a sampling rate of $f_{\text{s},k}=10$ Hz. The energy consumption coefficient $\eta_k=0.5,~{\delta_k}=1,~  \zeta_k=0.1$. The average large-scale path gain $\bar{g}_k$ is set to  $1$ for simplicity. The processing of each sample requires $C_k=50$ CPU cycles and each device is equipped with a computational capability of $f_{\text{c},k}=200$ Hz (CPU cycles per second). Additionally, the transmission power of each device is $p_k=0.5$ W,  the channel bandwidth is $b_k=2000$ Hz, the training time constraint is $T_{\text{total}}=5000$ s, and the overall energy constraint is $E_{\text{total}}=12000$ J.  Devices are located at a  distance of $d_k=50$ m, $\forall k$ from the server. For the heterogeneous device scenario, the parameter values of different device are generated according to a Gaussian distribution, where the mean is set to the corresponding parameter value used in the homogeneous device scenario, and the standard deviation is set to $0.5$ times the respective mean.}

\subsection{Verification of Objective Function}
In this paper, our objective function is derived based on the convergence analysis. To validate the effectiveness of the proposed objective function  (and the accuracy of the derived convergence analysis), we modify the configuration of the Edge AI system to study the relationship between system performance and the objective function given in \eqref{37}. 

We first evaluate the Edge AI system by running it under different numbers of training rounds. Each edge device is configured to collect $750$ samples for local training. In Table \ref{tab1}, we present the corresponding values of the objective function, accuracy, and loss function across different training rounds. As observed, there is a  monotonic relationship between the objective function and learning performance. Specifically, as the objective function value decreases, the learning accuracy progressively improves, while the loss function steadily decreases. 
\begin{table}[h]
	
	\centering
		\caption{Training Results with Different Rounds}
		\label{tab1}
	\begin{tabular}{cccc}
		\toprule
		\textbf{Rounds} & \textbf{Objective} & \textbf{Accuracy} & \textbf{Loss}  \\
		\midrule
20 & -12.6505 &  0.3622 & 0.5822 \\
30 &  -18.9757 & 0.5132 & 0.5216 \\
50 &  -31.6261 & 0.6215 & 0.4125 \\
100 &  -63.2523 & 0.7300 & 0.3114\\
150 & -94.8784 &  0.8050 & 0.2522 \\
		\bottomrule
	\end{tabular}
\end{table}

\begin{figure}[h]
	
	\centering            
	\includegraphics[width=0.9\linewidth]{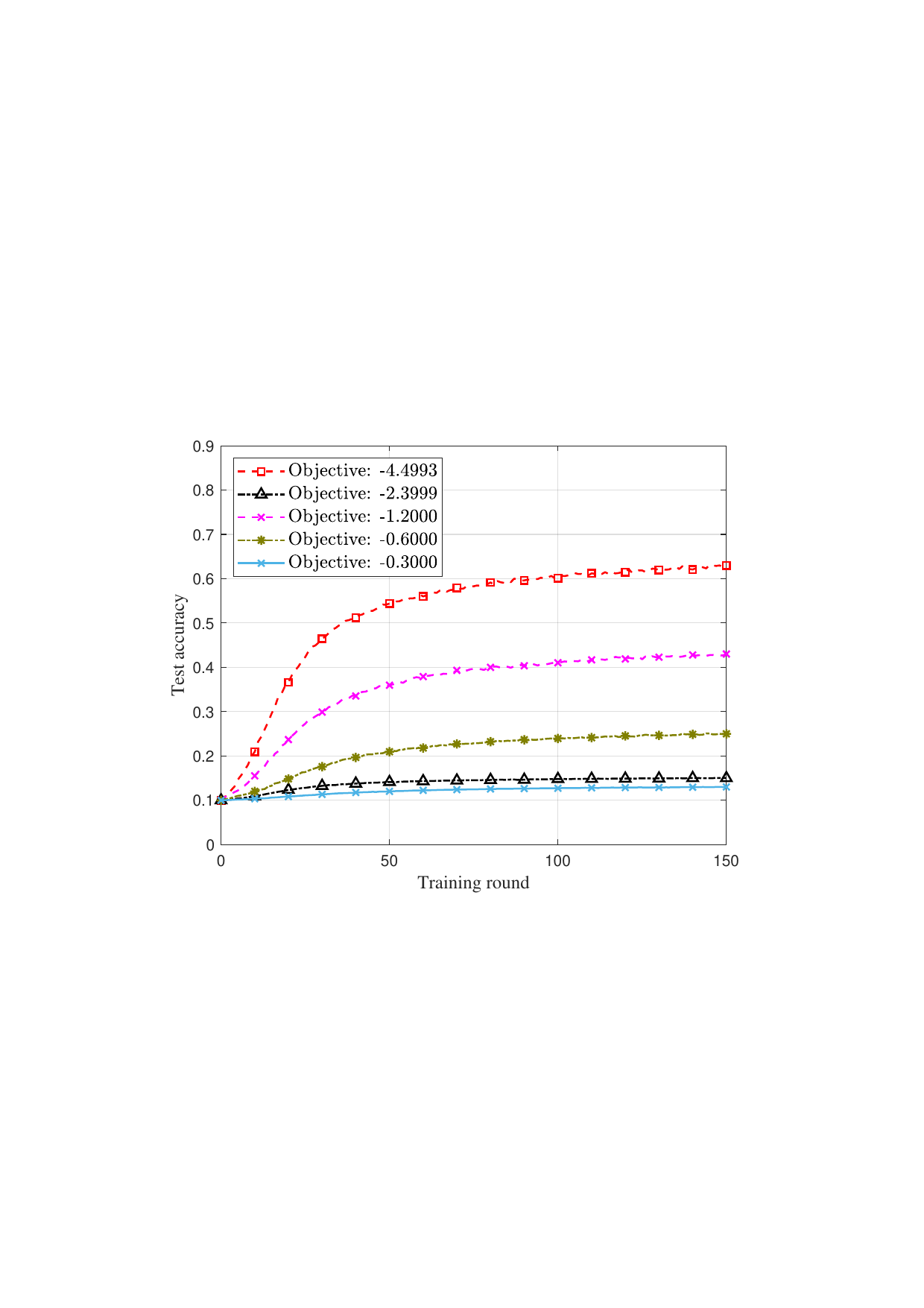}
	\caption{Test accuracy versus training round.}
	\label{diff_r_aver}
\end{figure}
\begin{figure}[h]
		
	\centering                    
	\includegraphics[width=0.9\linewidth]{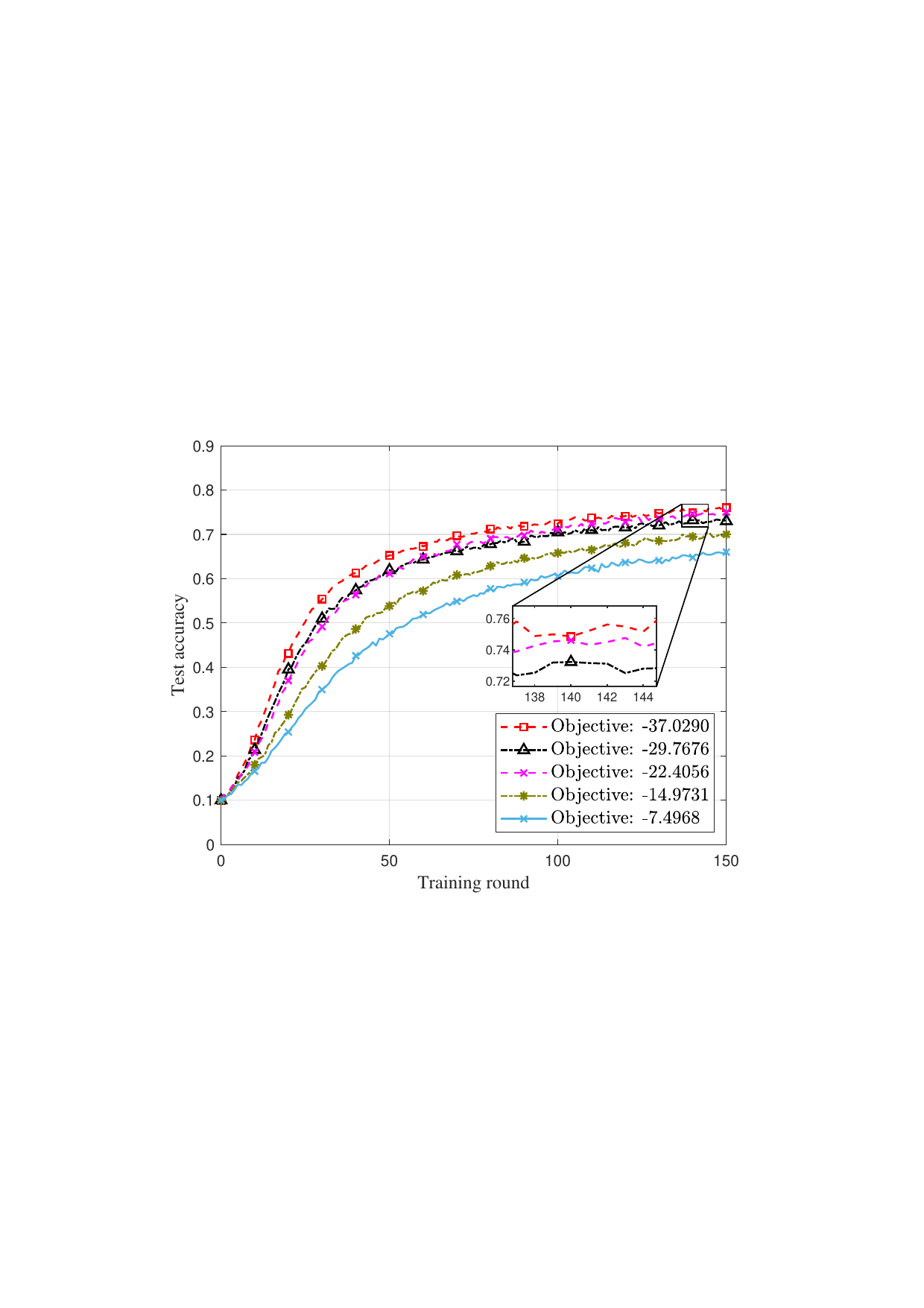}   
	\caption{Test accuracy versus training round.}
	\label{diff_r_min}
\end{figure}
{To further investigate how data quantity affects model performance, we conduct experiments under varying total sample sizes. In particular, we divide the experiments into two groups for clarity.
	
	Fig.~\ref{diff_r_aver} illustrates the test accuracy curves when the total number of training samples is set to 40, 80, 160, 320, and 600. We observe that as the sample size increases, the objective value improves significantly, leading to faster convergence and higher final accuracy. This figure highlights the sensitivity of learning performance in data-scarce regimes, where a modest increase in data volume can yield noticeable improvements.
	
	Fig.~\ref{diff_r_min} shows the results for larger sample sizes: 1000, 2000, 3000, 4000, and 5000. The performance consistently improves as more data becomes available, confirming the monotonic relationship between the optimized objective and the model accuracy. Even in the high-data regime, models benefit from increased data volume, albeit with diminishing returns in later stages.
	
	Overall, these two figures jointly validate our formulation: a smaller (better) objective function value corresponds to more efficient data acquisition and improved model accuracy, thereby demonstrating the strong correlation between the optimization objective and final learning performance.}

\subsection{Simulation Under Different Scenarios}

In this section, we explore the impact of varying key system parameters on the learning performance of the Edge AI system. First, we investigate how changes in the parameters $\alpha$ and $\beta$, which are introduced in the convergence analysis, affect the system's performance. These parameters represent assumptions made during the convergence analysis, and in different deployment scenarios, the values of $\alpha$ and $\beta$ that satisfy the convergence requirements may vary.

To assess their impact, we conduct a series of experiments in homogeneous devices case where we modify the values of $\alpha$ and $\beta$, and subsequently perform Edge AI optimization and training. This analysis helps to better understand the sensitivity of the system's learning capabilities to changes in these key assumptions.

\begin{table}[h]
	
	\centering
	\caption{Training Results for Different Alpha and Beta Values}
	\label{tabss2}
	\begin{tabular}{ccccccc}
		\toprule
		$\alpha$ & $\beta$ & \textbf{Obj.} & $t_{\text{sens}}$ & \textbf{Rounds} & \textbf{Acc.} & \textbf{Loss} \\
		\midrule
		0.50 & 0.50 & -19.0408 & 25 & 77  & 0.7485 & 0.2650 \\
		0.40 & 0.60 & -23.6003 & 34 & 57  & 0.7080 & 0.3120 \\
		0.60 & 0.40 & -14.8917 & 19 & 100 & 0.6753 & 0.3850 \\
		0.80 & 0.20 & -7.2075  & 14 & 134 & 0.6384 & 0.4560 \\
		0.20 & 0.80 & -36.3213 & 66 & 29  & 0.6031 & 0.5175 \\
		\bottomrule
	\end{tabular}
\end{table}

{
The results in Table~\ref{tabss2} illustrate the impact of varying $\alpha$ and $\beta$ on the Edge AI system’s performance. As $\alpha$ increases (e.g., from 0.20 to 0.80), the system places more emphasis on training rounds, as reflected by the reduced data collection time $t_{\text{sens}}$ and the increased number of training rounds. For example, when $\alpha = 0.80$ and $\beta = 0.20$, $t_{\text{sens}}$ drops to 14, while the training rounds increase to 134.

Conversely, as $\beta$ increases, the system prioritizes data collection, resulting in a longer sensing time and fewer training rounds due to energy constraints. For instance, with $\beta = 0.80$ and $\alpha = 0.20$, $t_{\text{sens}}$ rises to 66, while the number of training rounds decreases to 29.

Notably, the configuration $\alpha = 0.50$ and $\beta = 0.50$ achieves the best trade-off between accuracy and efficiency, attaining the highest test accuracy of 0.7485 and the lowest training loss of 0.2650. This suggests that a balanced weighting between data collection and training leads to the most effective system performance.

We next investigate the impact of device heterogeneity, specifically the degree of heterogeneity (measured by the standard deviation), on the performance of the Edge AI system. To evaluate this, we conduct system training in heterogeneous devices case by varying the standard deviation. The results of the  $20$ devices are presented.

Fig.~\ref{sensing_time}  compares the sensing time for different devices under three levels of standard deviation. As the standard deviation increases, the variance in sensing time among devices becomes more pronounced, indicating that devices with higher heterogeneity exhibit a wider range of sensing performance.
Fig.~\ref{time_cons} highlights the total time consumption for each device under different standard deviations. It shows that when device heterogeneity is low (i.e., with a lower standard deviation), the time consumption is more consistent across devices. As the standard deviation increases, the disparity in time consumption becomes more significant.
In Fig.~\ref{energy_consumption} the energy consumption plot demonstrates that devices with lower heterogeneity consume energy more consistently, while higher heterogeneity leads to a broader range of energy consumption across devices. Moreover, Table \ref{tabsscc2} shows that when the standard deviation is low, the system achieves higher test accuracy  and lower training loss . In contrast, higher heterogeneity (standard deviations of 0.3 and 0.5) results in significantly lower accuracy and higher loss, indicating degraded learning performance.

The overall learning performance is better when the level of device heterogeneity is lower. This improvement can be attributed to the more balanced specifications of the devices, where the capabilities of each device are more aligned. In such cases, the system is not constrained by the limitations of less capable devices, leading to more efficient learning. Conversely, as heterogeneity increases, the performance of the system is dragged down by the weaker devices, resulting in poorer learning outcomes. This reinforces the importance of maintaining uniformity in device specifications for optimal Edge AI system performance.}
\begin{figure}[h]
	\centering            
	\includegraphics[width=0.8\linewidth]{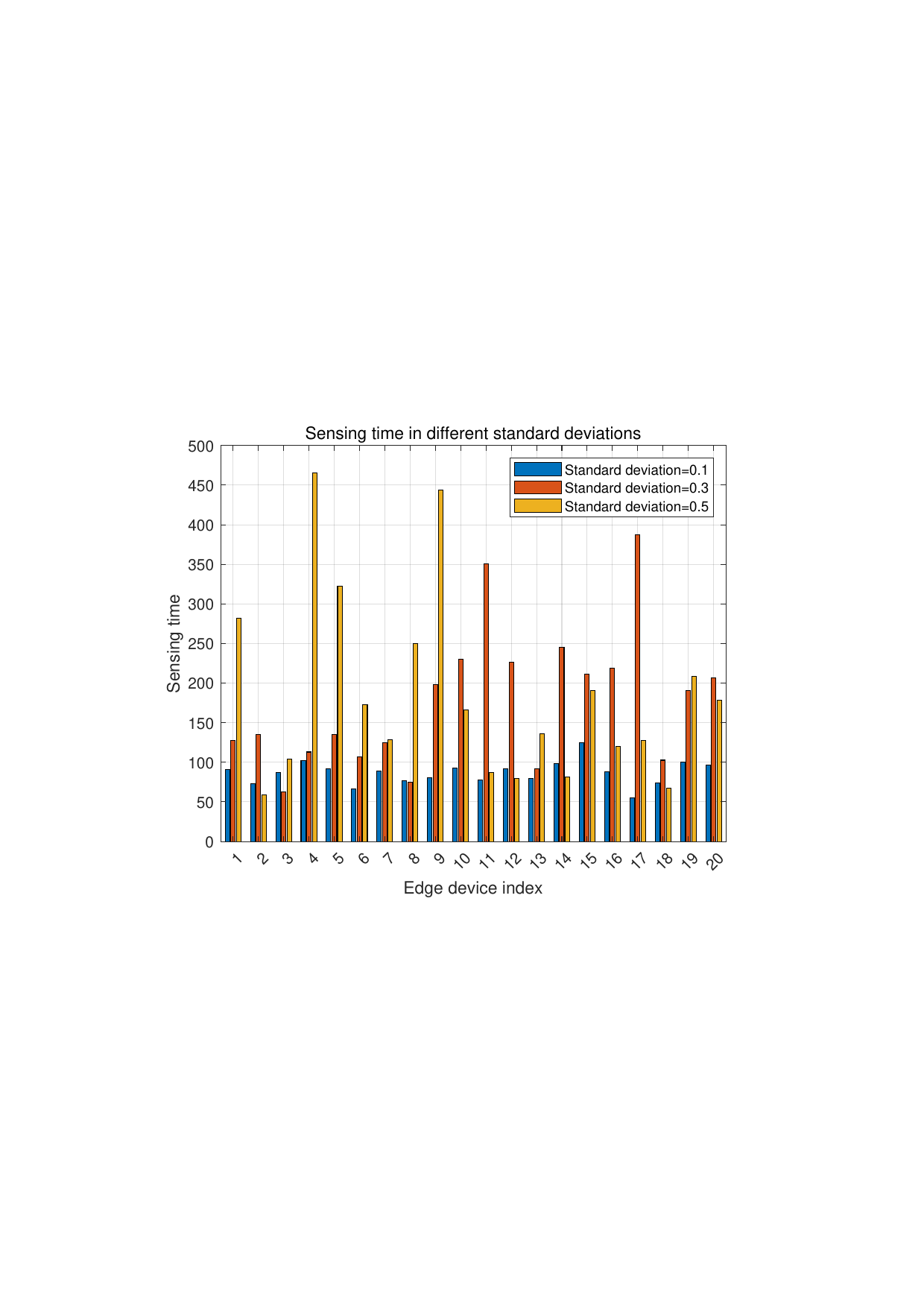}
	\caption{Sensing time in different standard deviations.}
	\label{sensing_time}
\end{figure}
\begin{figure}[h]
	\centering                    
	\includegraphics[width=0.8\linewidth]{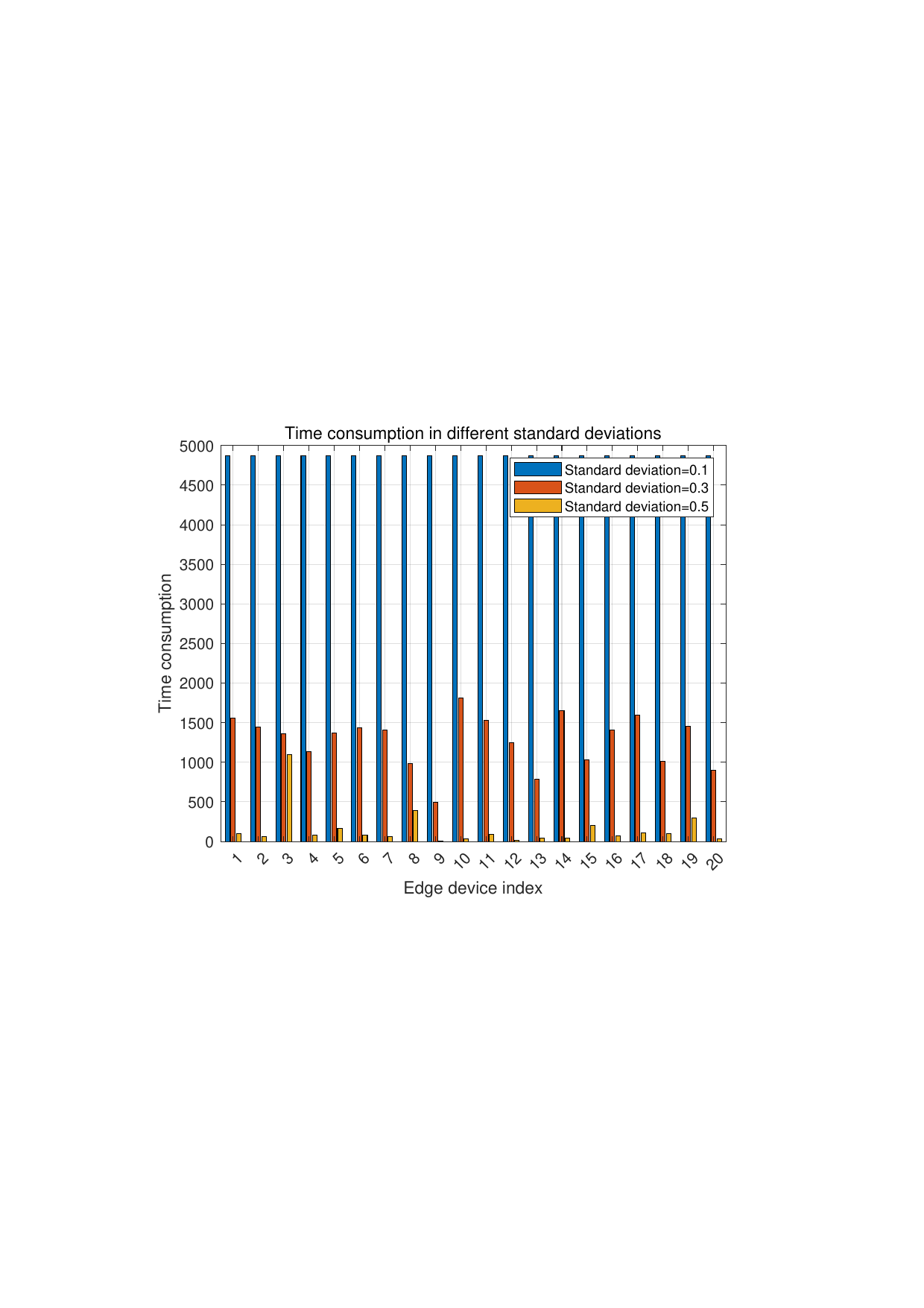}   
	\caption{Time consumption in different standard deviations.}
	\label{time_cons}
\end{figure}\begin{figure}[h]
\centering            
\includegraphics[width=0.8\linewidth]{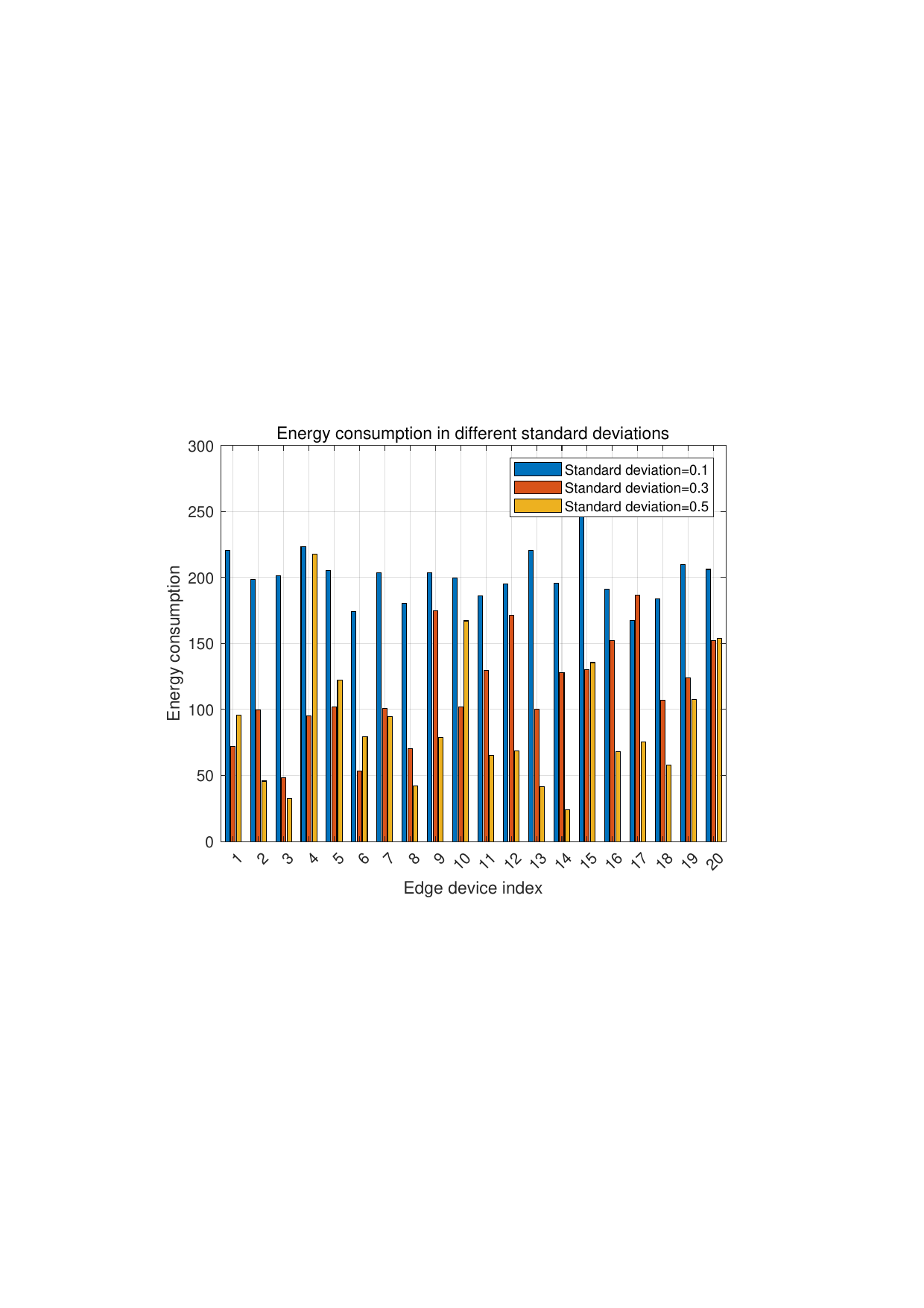}
\caption{Energy consumption in different standard deviations.}
\label{energy_consumption}
\end{figure}

\begin{table}[h]
	
	\centering
	\caption{Training Results for Different Standard Deviations}
	\label{tabsscc2}
	\begin{tabular}{ccccc}
		\toprule
		\textbf{Std. Dev.} & \textbf{Obj.} & \textbf{Rounds} & \textbf{Acc.} & \textbf{Loss} \\
		\midrule
		0.1 & -16.4252 & 74  & 0.7288 & 0.2785 \\
		0.3 & -9.7236  & 27  & 0.5102 & 0.6057 \\
		0.5 & -12.5621 & 4   & 0.1123 & 1.8658 \\
		\bottomrule
	\end{tabular}
\end{table}

\subsection{Performance Comparison with Benchmarks}

{
	In this subsection, we evaluate the proposed scheme against several state-of-the-art baseline strategies under heterogeneous device setting. These methods reflect diverse approaches to energy and time efficient data acquisition in Edge AI systems.

\begin{itemize}
	
	\item \textbf{Efficiency-Aware Sampling Strategy (EAS) \cite{zeng2021energy,chen2020joint}:}
	
	We design an efficiency-aware sampling strategy  that assigns sensing durations to devices based on a newly defined {composite energy efficiency factor} $\Phi_k$. This factor captures the device's efficiency across sensing, computation, and communication, and is computed as
	\[
	\Phi_k = \frac{f_{\text{s},k}  f_{\text{c},k} r_k}{c_k  (\eta_k + \zeta_k + \kappa_k + p_k)},
	\]
	where $f_{s,k}$ denotes the sampling rate of device $k$, $f_{\text{s},k}$ is the base computation throughput per sample of device $k$, and $r_k$ is the uplink transmission rate of device $k$. The denominator quantifies energy cost: $c_k$ is the per-sample computation complexity, $\eta_k$ is the sensing efficiency, $\zeta_k$ is the sensing booting overhead, $\kappa_k$ is the computation energy coefficient, and $p_k$ is the transmission power.
	
	Each device $k$ is then assigned a sensing time proportional to its relative efficiency:
	\[
	t_{\text{sens},k} = T_{\text{sens}}^{\max} \cdot \frac{\Phi_k}{\max_j \Phi_j},
	\]
	where $T_{\text{sens}}^{\max}$ denotes the total available sensing time. The value of $T_{\text{sens}}^{\max}$ is selected such that the entire environment dataset of size $M_{\text{total}}$ can be jointly collected in parallel by all devices. 
	 This method ensures efficient use of sensing budgets and reflects realistic energy-aware federated learning frameworks \cite{zeng2021energy,chen2020joint}.
	
	\item \textbf{Greedy Sampling Strategy (GSS) \cite{singhal2024greedy,jeong2018communication}:}
	In this approach, all devices simultaneously acquire data until the entire environmental dataset of size $M_{\text{total}}$ is collected. No prioritization or coordination is applied across devices. This strategy mirrors practical deployment scenarios with non-optimized data acquisition \cite{singhal2024greedy,jeong2018communication}.
	
	\item \textbf{Partial Sampling Strategy (PSS) \cite{tran2019federated,yang2020energy}:}
	To model scenarios with strict sensing budgets, this strategy limits the collected data to $15\%$ of the total sample volume. Based on this, we calculate the corresponding sensing time \(\mathbf{t}_{\text{sens}}\), and then substitute \(\mathbf{t}_{\text{sens}}\) into \eqref{37} to solve for the optimal number of training rounds \(I\). This baseline is widely adopted in resource-constrained federated learning literature \cite{tran2019federated,yang2020energy}.
	
	\item \textbf{Unconstrained Resource Strategy (URS):}
	This idealized baseline assumes unlimited energy and time resources. All edge devices collaboratively collect the full environment dataset, and training proceeds with sufficient iterations to achieve the best possible model performance. The results of this scheme serve as a performance upper bound for comparison purposes.
	
\end{itemize}}

\begin{figure}[h]
	
	\centering            
	\includegraphics[width=0.9\linewidth]{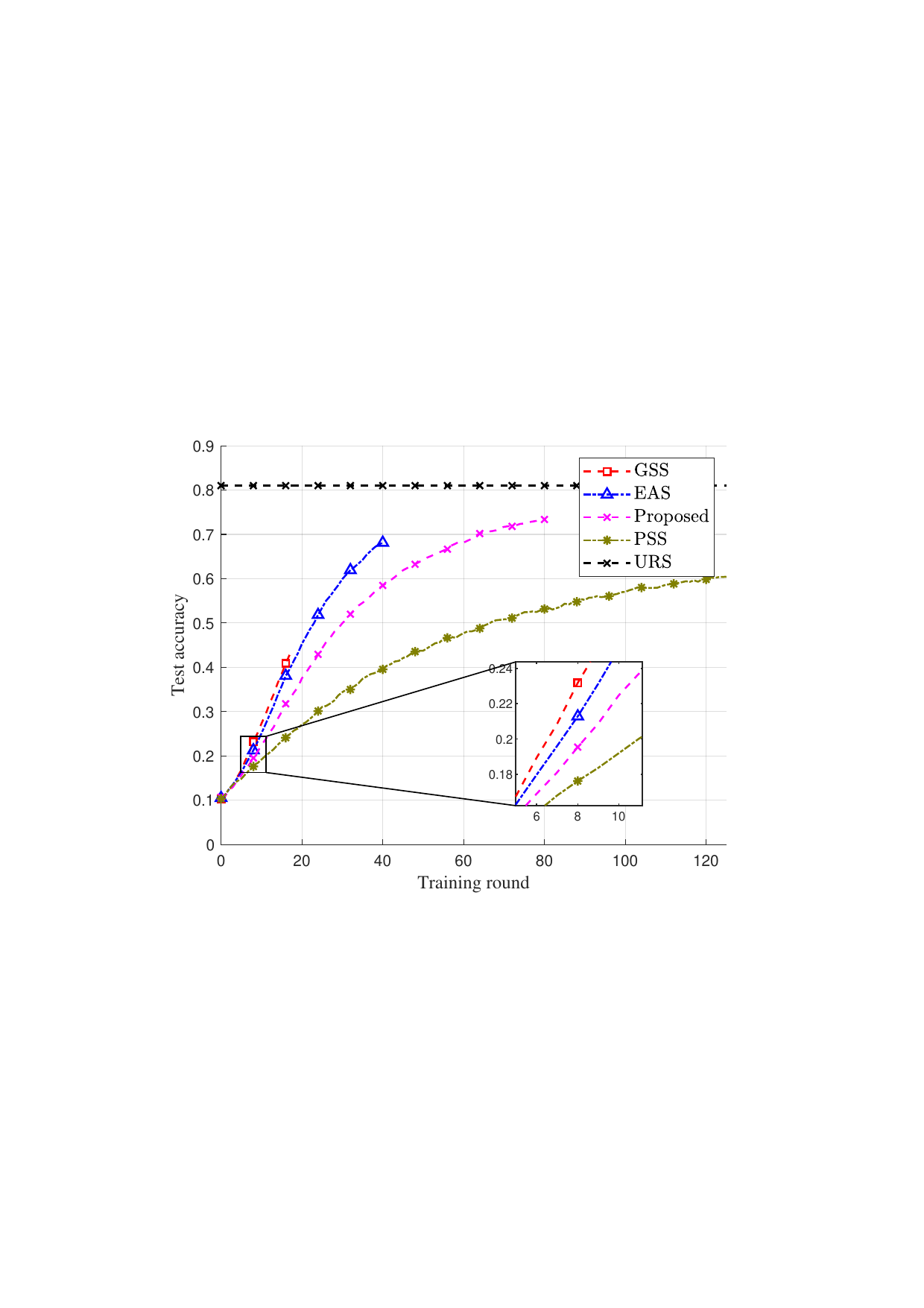}
	\caption{Test accuracy versus training round on CIFAR-10 dataset.}
	\label{homo_diff_losspdf}
\end{figure}
\begin{figure}[h]
		
	\centering                    
	\includegraphics[width=0.9\linewidth]{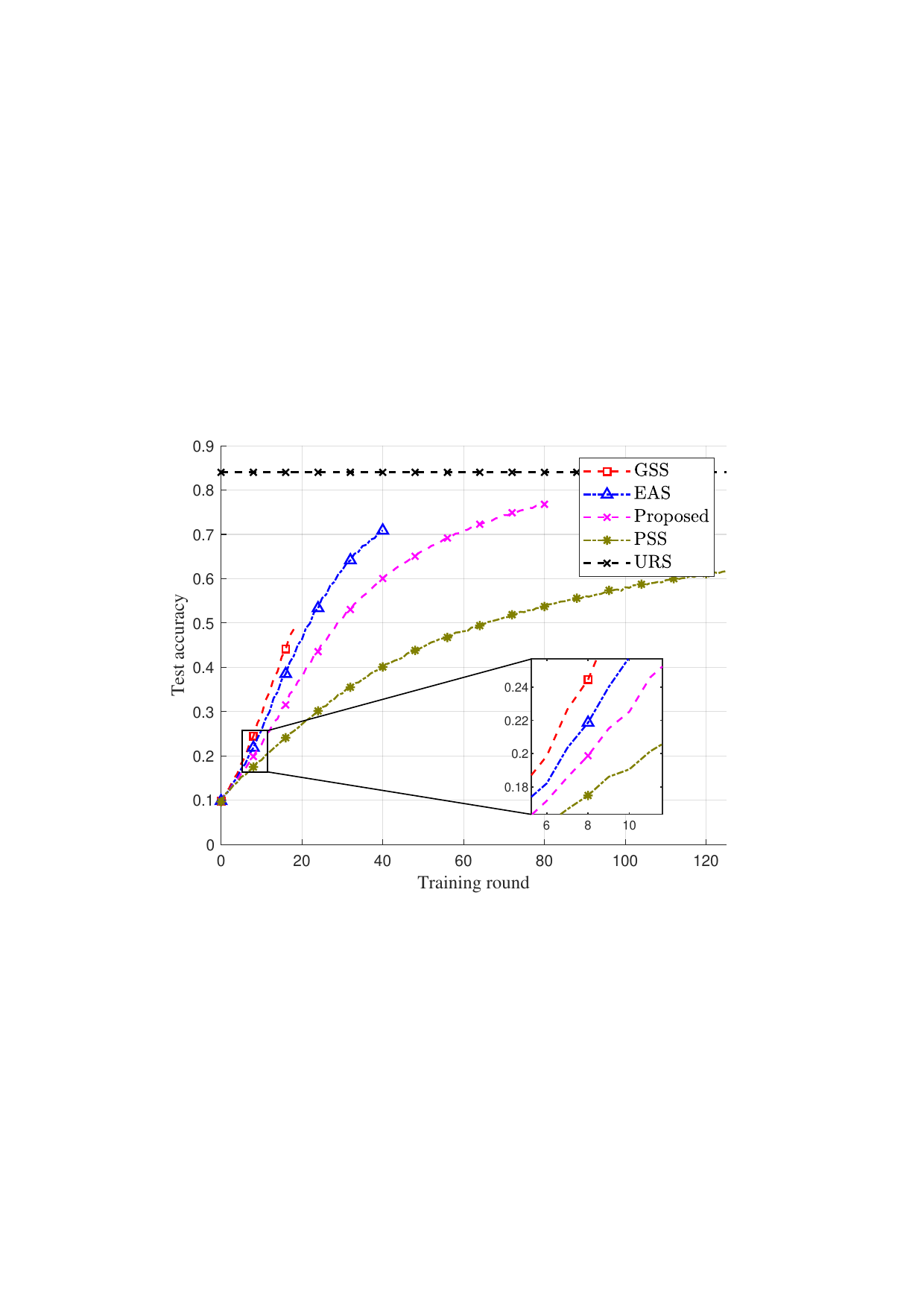}   
	\caption{Test accuracy versus training round on EuroSAT dataset.}
	\label{homo_diff_loss}
\end{figure}
{
To validate the effectiveness of the proposed scheme, we evaluate its performance on both the CIFAR-10 and EuroSAT datasets under heterogeneous device settings, comparing it with four representative baseline strategies: GSS, EAS, PSS, and URS. The test accuracy curves versus training rounds are illustrated in Fig.~\ref{homo_diff_losspdf} and Fig.~\ref{homo_diff_loss}.

On the CIFAR-10 dataset (Fig.~\ref{homo_diff_losspdf}), the proposed scheme  outperforms the baseline methods in the entire training process. While GSS and EAS demonstrate relatively rapid accuracy gains in the initial rounds, their performance eventually saturates due to inefficient utilization of training resources or lack of joint optimization. In contrast, the proposed method achieves a more balanced trade-off between data acquisition and training iterations, leading to sustained improvement and higher final accuracy. Specifically, although EAS leverages energy-efficient sensing on favorable devices, it does not adaptively allocate training time, resulting in suboptimal final performance. PSS performs the worst among all schemes due to its severely limited data usage. GSS starts strong but fails to maintain long-term gains due to insufficient training rounds. As expected, the URS baseline achieves the highest performance, serving as a theoretical upper bound assuming unlimited resources. Nevertheless, our proposed scheme approaches this ideal performance under realistic constraints.

A similar performance pattern is observed on the EuroSAT dataset (Fig.~\ref{homo_diff_loss}). The proposed method outperforms all baseline strategies in terms of final test accuracy. Compared to EAS, the proposed scheme exhibits better long-term accuracy, especially as the training progresses. This observation suggests that the advantage of balanced resource allocation becomes more pronounced when dealing with spatially redundant remote sensing data. The PSS and GSS baselines again suffer from either underutilization of training time or excessive data acquisition, leading to inferior performance. Overall, the proposed method demonstrates strong generalization ability across different datasets.

We observe that methods with longer sensing durations, such as GSS and EAS, achieve faster initial accuracy improvements due to access to a larger volume of training data. Specifically, GSS exhibits the steepest accuracy growth since it collects the entire dataset, followed by EAS, which allocates more sensing time to energy-efficient devices and thus accumulates a substantial amount of data. In contrast, the proposed scheme achieves a balanced trade-off between sensing and training: although it collects fewer data samples than GSS and EAS, it reserves sufficient time for iterative training, resulting in more stable and consistent convergence. On the other hand, PSS suffers from extremely limited data acquisition, which hinders its ability to support effective training and leads to slower accuracy improvement throughout.
}

	\section{Conclusions}{
In this paper, we studied the learning performance optimization for Edge AI systems under time and energy constraints. We proposed a joint optimization framework that allocates sensing time and training rounds across devices to minimize a convergence-based surrogate objective. Our formulation captures key resource trade-offs in sensing, computation, and communication, and accommodates both homogeneous and heterogeneous device settings. Theoretical convergence analysis was provided, and efficient algorithms were developed to solve the resulting non-convex problem. Extensive experiments on CIFAR-10 and EuroSAT datasets validate the effectiveness of the proposed approach in improving test accuracy under practical constraints.
Future extensions include incorporating data quality, multimodal sensing, or active perception into the optimization framework, which are important in more complex real-world Edge AI systems beyond the scope of our current model.
}

	\appendices

\section{Proof of Theorem~\ref{pro2}}\label{app_a}
In edge AI system, the aggregated gradient in \eqref{eq04} is an approximation of gradient $\nabla F({\xx_i})$, i.e.,
\begin{align}
	{\gg_i}=\nabla F({\xx_i})+{\ee_i},\label{11b}
\end{align}
where ${\ee_i}$ is the residual error in gradient evaluation.
From Assumption \ref{as1} and \ref{as2}, the following inequalities can be established:
\begin{align}
	F(\yy ) &\leq F(\xx) + (\yy - \xx)^\top \nabla F(\xx) + \frac{L}{2}\|\yy - \xx\|^2, \label{10}\\
	F(\yy ) &\geq F(\xx) + (\yy  - \xx)^\top \nabla F(\xx) + \frac{\mu}{2}\|\yy  - \xx\|^2. \label{13}
\end{align}

Let \( \xx = {\xx_i} \) and \( \yy = {\xx_i} - (1/L){\gg_i} \) in inequality \eqref{10}, and simplify to obtain:
\begin{align*}
	F({\xx_i} - (1/L)\gg_i) &\leq F({\xx_i}) - \frac{1}{L} \gg_i^\top \nabla F({\xx_i}) \\
	&\quad + \frac{1}{2L}\|\gg_i\|^2.
\end{align*}

Next, substituting the definitions of \( {\xx_{i+1}} \) and \( {\gg_i} \) into this expression, we get:
\begin{align}
	F({\xx_{i+1}}) &\leq F({\xx_i}) - \frac{1}{L} (\nabla F({\xx_i}) + {\ee_i})^\top \nabla F({\xx_i}) \notag \\
	&\quad + \frac{1}{2L} \|\nabla F({\xx_i}) + {\ee_i}\|^2 \notag \\
	&= F({\xx_i}) - \frac{1}{L} \|\nabla F({\xx_i})\|^2 - \frac{1}{L} \nabla F({\xx_i})^\top {\ee_i} \notag \\
	&\quad + \frac{1}{2L}\|\nabla F({\xx_i})\|^2 + \frac{1}{L} \nabla F({\xx_i})^\top {\ee_i} \notag \\
	&\quad + \frac{1}{2L}\|{\ee_i}\|^2 \notag \\
	&= F({\xx_i}) - \frac{1}{2L}\|\nabla F({\xx_i})\|^2 + \frac{1}{2L}\|{\ee_i}\|^2. \label{14}
\end{align}

We now use \eqref{13} to derive a lower bound for the norm of \( \nabla F({\xx_i}) \) in relation to the optimality of \( F({\xx_i}) \). By minimizing both sides of \eqref{13} with respect to \( y \), it follows that the minimum on the left-hand side is achieved when \( \yy = \xx^\ast \); similarly, the minimizer of the right-hand side is given by \( \yy = \xx - (1/\mu)\nabla f(\xx) \). Thus, we have:
\begin{align}
	F(\xx^\ast) &\geq F(\xx) - \frac{1}{\mu} \nabla F(\xx)^\top \nabla F(\xx) \notag \\
	&\quad + \frac{1}{2\mu} \|\nabla F(\xx)\|^2 = F(\xx) - \frac{1}{2\mu} \|\nabla F(\xx)\|^2.
\end{align}

By rearranging and specializing for \( \xx = {\xx_i} \), we obtain:
\begin{align}
	\|\nabla F({\xx_i})\|^2 \geq 2\mu (F({\xx_i}) - F(\xx^\ast)). \label{16}
\end{align}

Finally, subtract \( F(\xx^\ast) \) from both sides of \eqref{14} and apply \eqref{16} to derive the following:
\begin{align}
	F({\xx_{i+1}}) - F(\xx^\ast) &\leq F({\xx_i}) - F(\xx^\ast) - \frac{\mu}{L} (F({\xx_i}) - F(\xx^\ast)) \notag \\
	&\quad + \frac{1}{2L} \|{\ee_i}\|^2 \notag \\
	&= \left( 1 - \frac{\mu}{L} \right) [F({\xx_i}) - F(\xx^\ast)] + \frac{1}{2L} \|{\ee_i}\|^2\label{19}
\end{align}


Typically, edge devices cannot collect all available training samples, i.e., $M < M_{\text{total}}$.  Because not all data samples are used, the convergence speed of this method is sublinear \cite{friedlander2012hybrid}. However, as the collected sample size increases, the error in the computed gradient \( {\gg_k} \) decreases, allowing the sample size to be used as a mechanism to control gradient error.
Define \(\mathcal{B}= \{\mathcal{D}_k,\forall k \}\) as the set of all collected data, and define \( \mathcal{N} \) as the complement of \(\mathcal{B}\), such that \( \mathcal{B}\cup \mathcal{N}_k = \{1, \dots, M_{\text{total}}\} \). The residual of the gradient, given by \eqref{11b}, satisfies the following expression:
\begin{align}
	{\ee_i} = \frac{M_{\text{total}} - |\mathcal{B}|}{M_{\text{total}} |\mathcal{B}|} \sum_{i \in \mathcal{B}} \nabla f({\xx_i,\xi_i}) - \frac{1}{M_{\text{total}}} \sum_{i \in \mathcal{N}} \nabla f({\xx_i},\xi_i). 
\end{align}

The first term accounts for a weighted adjustment of the gradient estimate, while the second term captures the gradient component not included in the sampled set. Based on Assumption \ref{as3}, we can establish a bound on the norm of this residual relative to the full-sample gradient
\begin{align}
	&\| \ee_i \|^2\\
	&= \left\| \left( \frac{M_{\text{total}} - |\mathcal{B}|}{M_{\text{total}} |\mathcal{B}|} \right) \sum_{i \in \mathcal{B}} \nabla f(\xx_i,\xi_i) - \frac{1}{M_{\text{total}}} \sum_{i \in \mathcal{N}} \nabla f(\xx_i,\xi_i) \right\|^2 \notag \\
	&\leq \left( \frac{M_{\text{total}} - |\mathcal{B}|}{M_{\text{total}} |\mathcal{B}|} \right)^2 \left( \left\| \sum_{i \in \mathcal{B}} \nabla f(\xx_i,\xi_i) \right\| + \frac{1}{M_{\text{total}}} \left\| \sum_{i \in \mathcal{N}} \nabla f(\xx_i,\xi_i) \right\| \right)^2 \notag \\
	&\leq \left( \frac{M_{\text{total}} - |\mathcal{B}|}{M_{\text{total}} |\mathcal{B}|} \right)^2 \left( \sum_{i \in \mathcal{B}} \|\nabla f(\xx_i,\xi_i) \| + \frac{1}{M_{\text{total}}} \sum_{i \in \mathcal{N}} \|\nabla f(\xx_i,\xi_i) \| \right)^2 \notag \\
	&\leq 4 \left( \frac{M_{\text{total}} - |\mathcal{B}|}{M_{\text{total}}} \right)^2 \left( \beta_1 + \beta_2 \|\nabla F(\xx_i) \|^2 \right).
\end{align}
Next, in the same way that \eqref{16} is derived, we use Assumption \ref{as1} to derive the upper bound 
\begin{align}
	\|\nabla F({\xx_i})\|^2 \leq 2L(F({\xx_i}) - F(\xx^\ast)). \label{20a}
\end{align}
Thus the upper bound for $\|\ee_i\|^2$ can be expressed using the ratio of sample sizes and the gap to the optimality:
\begin{align}
	\|\ee_i\|^2 \leq 4 \left[ \frac{M_{\text{total}} - |\mathcal{B}|}{M_{\text{total}}} \right]^2 \left( \beta_1 + 2 \beta_2 L [F(\xx_k) - F(\xx^*)] \right).\label{error_2}
\end{align}
Combing \eqref{19} and \eqref{error_2}, we obtain
\begin{align}
	&F({\xx_{i+1}}) - F(\xx^\ast) \leq \left( 1 - \frac{\mu}{L} \right) [F({\xx_i}) - F(\xx^\ast)] \notag\\
	&\quad\quad+2\left[ \frac{M_{\text{total}} - |\mathcal{B}|}{M_{\text{total}}} \right]^2 \left( \beta_1/L + 2 \beta_2  [F(\xx_i) - F(\xx^*)] \right)\label{43}
\end{align}
Recursively applying \eqref{43} for $i+1$ times, we obtain 
\begin{align}\label{41}
	&F({\xx_{I}}) - F(\xx^\ast) \leq \Psi(\mathcal{B})^I[F(\xx_0)-F(\xx)]\notag\\&+2\left[ \frac{M_{\text{total}} - |\mathcal{B}|}{M_{\text{total}}} \right]^2\frac{\beta_1}{L}\frac{1-\Psi(\mathcal{B})^I}{1-\Psi(\mathcal{B})}
\end{align}
where $\Psi(\mathcal{B})=\left[(1-\frac{\mu}{L})+4\left(\frac{M_{\text{total}}-|\mathcal{B}|}{M_{\text{total}}}\right)^2\beta_2\right]$.

	\bibliographystyle{ieeetran}
	\bibliography{IEEEabrv,mybib}
\end{document}